\newtheorem {theorem} {Theorem}
\newtheorem {lemma}   {Lemma}
\theoremstyle{definition}
\newtheorem {observation}   {Observation}
\newtheorem {corollary}   {Corollary}
\newtheorem {definition}    {Definition}
\newcommand{\boundingbox}[1]{\text{\textsl{box}}(#1)}
\newcommand{\poly}[1]{\text{\textsl{poly}}(#1)}%
\newcommand{\circum}[1]{\text{\textsl{circ}}(#1)}
\newcommand{\width}[1]{\text{\textsl{width}}(#1)}
\newcommand{\height}[1]{\text{\textsl{height}}(#1)}
\newcommand{\contaminationset}{\mathcal{C}}
\newcommand{\searchtime}[1]{\text{\textsl{searchtime}}(#1)}
\newcommand{\N}{\mathbb{N}}
\newcommand{\defi}[1]{\emph{#1}\defirand{#1}}
\newcommand{\defirand}[1]{\marginpar[
\begin{flushright}\footnotesize{\sffamily{#1}}\end{flushright}
]{
\footnotesize{\sffamily{#1}}
}}
\newcommand{\vect}[3][,]{%
{#2}_1 #1 
{#2}_2 #1 
\ifthenelse{\equal{#1}{,}}{\dotsc}{\dotsb} 
\ifthenelse{\equal{#3}{}}{}{#1  {#2}_{#3}}
} 
\Crefname{lemma}{Lemma}{Lemmata}
\definecolor{pseudocodebg}{rgb}{0.854,0.894,0.969}
\definecolor{pseudocodekeyword}{rgb}{0.262,0.329,0.439}
\definecolor{pseudocodestring}{rgb}{.6,0,0} 
\definecolor{pseudocodecomment}{rgb}{.4,.4,.4} 
\lstdefinelanguage{pseudo}
	{
		morekeywords={IF,THEN,ELSE,FOR,ALL,WHILE,DO,UNTIL,END,REPEAT,AND,OR,NOT,RETURN,ELSE IF,SUBROUTINE,ALGORITHM},
		sensitive=false,
		morecomment=[l]{\#},
		morestring=[b]",
	}
\itshape\color{pseudocodecomment},
\bfseries\color{pseudocodekeyword},
\sffamily\color{pseudocodestring},
\title{A local strategy for cleaning expanding cellular domains by simple robots}
\author{Rolf Klein, David Kriesel, Elmar Langetepe\\\normalsize{Department of Computer Science 1}\\\normalsize{University of Bonn}\\\normalsize{\texttt{rolf.klein@uni-bonn.de}, \texttt{mail@dkriesel.com}, \texttt{elmar.langetepe@cs.uni-bonn.de}}}
\date{\today}
\newlength{\myalgowidth}
\newlength{\myalgonumberingwidth}
\begin{document}
\maketitle

\begin{abstract}
We present a strategy {\em SEP} for finite state machines tasked with cleaning a cellular environment in which a contamination spreads. Initially, the contaminated area is of height $h$ and width $w$. It may be bounded by four monotonic chains, and contain rectangular holes. The robot does not know the initial contamination, sensing only the eight cells in its neighborhood. It moves from cell to cell, $d$ times faster than the contamination spreads, and is able to clean its current cell. A speed of $d<\sqrt{2}(h+w)$ is in general not sufficient to contain the contamination. Our strategy {\em SEP} succeeds if $d \geq 3(h+w)$ holds. It ensures that the contaminated cells stay connected. Greedy strategies violating this principle need speed at least $d \geq 4(h+w)$; all bounds are up to small additive constants. 
\end{abstract}

{\bf Keywords:} Motion planning, finite automata, expanding contamination, cleaning strategy

\section{Introduction}
\label{s_introduction}














During the last years, researchers in technical fields have become increasingly fascinated by the potential of biological, decentrally organized systems. From myriads of fireflies powdering entire meadows with shallow light, even flashing in a synchronized way, to ant colonies with sometimes millions of individual beings building most sophisticated structures that allow for air conditioning, storage and even growth of food, such systems exhibit fault-resistance and cost-efficiency while being flexible and able to solve most complex tasks (an extensive survey can be found at, for instance, \cite{garnier2007biological}). 

Understanding such phenomena represents a serious challenge to theoretical computer science. Although there is a rich body of work on autonomous agents, comparably few papers offer theoretical results on agents who have limited perception, limited computing and  translocating capabilities, and yet successfully deal with dynamically changing environments.

 In this paper we are studying cellular environments in the plane. Two cells are adjacent if they share an edge. At each time, finitely many cells may be contaminated, all others are clean. 
 
\begin{definition}
The set of all contaminated cells at a time is called \defi{contamination}. 
\end{definition}

\begin{definition}\label{def_contaminationset}
We assume that an initial contamination $C$ has the following geometric properties. It is connected, and its outer boundary consists of four monotonic chains; they connect the extreme edges supporting the bounding box of $C$. Inside, $C$ may contain rectangular holes consisting of clean cells; see \cref{p_initialcontamination}. Let $\contaminationset$\defirand{$\contaminationset$} be the set of all such contaminations. 
\end{definition}

\begin{definition} Inspired by forest fires or oil spills, every \defi{$d$} time units a contamination spreads from each contaminated cell to its four neighbors, as shown in \cref{p_initialcontamination-spread}.
\end{definition}
\begin{figure}
\centering
\includegraphics[width=\columnwidth]{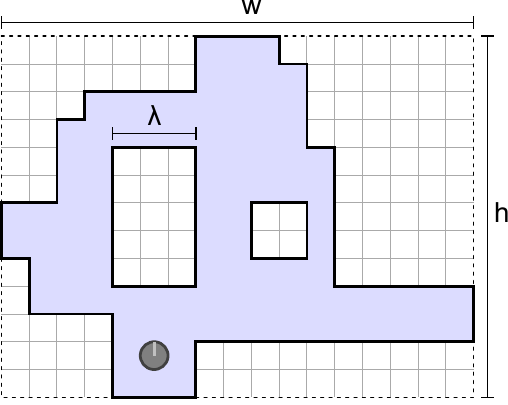} 
\caption{An initial contamination in $\contaminationset$. Also depicted are the contamination's axis aligned bounding box (the rectangle outlined in a dashed way), its width ($w$) and its height ($h$). $\lambda$ is the length of the longest short side among any holes.}
\label{p_initialcontamination}
\end{figure}
\begin{figure}
\centering
\includegraphics[width=\columnwidth]{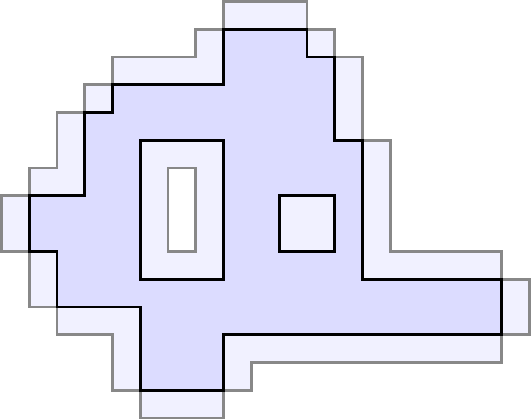} 
\caption{The contamination from \cref{p_initialcontamination}. The cells that will  become contaminated during a spread are colored lighter.}
\label{p_initialcontamination-spread}
\end{figure}

We want to enable a robot to clean the contamination. Initially, the robot is located in one of the contaminated cells. It can sense the status of the eight cells in its neighborhood; see \cref{p_initialcontamination}. In each time unit, the robot can turn, move to one of the four adjacent cells, and decide to clean it. Thus, $d$ measures the robot's speed against the contamination's. The robot is a finite automaton. It has no previous knowledge about $C$, and because its memory is of constant size it cannot store a lot of information as it moves around. There is no global control or any other information the robot could make use of.

Whether or not environment $C$ can be cleaned depends on its initial extension and its spreading time relative to the robot's speed, $d$. Let $h$ and $w$ denote height and width of the bounding box of $C$, respectively. In our model the perimeter of $C$, i.e., the number of edges on its outer boundary, equals $2(h+w)$. Thus, $h+w$ is a reasonable measure for the size of $C$; see \cref{p_initialcontamination}.

In Theorem~\ref{lowbound-theo} we will establish a geometric lower bound in terms of $h$ and $w$.
No robot can clean all environments of height $h$ and width $w$ if its speed $d$ is less than $\sqrt{2} (h+w)-4$ (not even if the robot knows $C$ and has Turing machine power).

Our main contribution is a strategy {\em Smart Edge Peeling (SEP)} for which we can prove the following performance guarantee.  Let $\lambda$ denote the maximum length of all shorter edges of the rectilinear holes inside $C$; see \cref{p_initialcontamination}.

\textbf{Theorem 1.} \emph{
Given speed $d \geq 3(h+w) +6$, and starting from a contaminated cell, strategy \emph{SEP} cleans each contamination in $\contaminationset$ of height $h$ and width $w$ in at most $(\frac{\lambda}{2} + h + w +5) d$ many steps.}

Starting from a contaminated cell, strategy {\em SEP} heads for the outer boundary of $C$, without attempting to enlarge any holes. Then it carefully peels the perimeter of $C$, layer by layer, making sure that the set of contaminated cells always stays connected. In order to maintain this invariant the strategy will {\em not} clean \defi{critical} contaminated cells which would destroy connectivity locally. The strategy is precisely defined  in \cref{s_strategy}. We have also implemented the strategy.  
A supplementary video of an execution of the strategy can be found at
\url{http://tizian.informatik.uni-bonn.de/Video/smartedgepeeling.mp4}\,.

\begin{definition}
Let $c$ be a contaminated cell. Let $S$ be the set of $c$ and its eight neighbors. Then, $c$ is considered critical if there exist contaminated cells $a, b$ in $S$ with $a \neq b \neq c$, so that all contaminated paths from $a$ to $b$ necessarily lead through $c$.
\end{definition}

This concept is taken from pixel-space filling algorithms in computer graphics \cite{henrich}; see \cref{p_criticality} for an example. Although this constraint causes extra cell visits and possible delay (when cleaning can only be continued once a spread has occurred), our strategy compares favorably with a greedy approach that does not care about connectivity.
\begin{figure}%
\center
\includegraphics[width=\columnwidth]{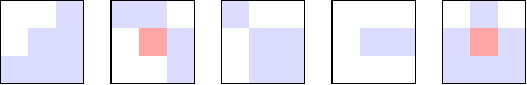}%
\caption{In the five examples depicted, the second and the fifth one's center cell is critical. With respect to the third example, note that we do not consider diagonally adjacent contaminated cells as connected.}%
\label{p_criticality}%
\end{figure}

In fact, a greedy strategy may completely clean one connected component, being unaware of others. Only after many spreads would it sense that contamination has again reached the robot's current position. In \cref{neogreedy-theo} we show that this approach can fail if speed $d$ is less than $4(h+w)$, whereas {\em SEP} always works if $d \geq 3(h+w) +6$, by \cref{main-theo}. Maintaining connectivity has the additional advantage that the robot knows when the very last cell has been cleaned, so that it could turn to another task.

The rest of this paper is organized as follows. In \cref{s_previous} we review related previous work.  Strategy {\em SmartEdgePeeling} is presented in \cref{s_strategy}. We show how to take advantage of the geometry of the scene (rectangular holes, an outer boundary consisting of four monotonic chains), and design {\em SEP} in such a way that these properties are also maintained under spreads and cleaning activities; proving these invariants is a major part of our analysis (\cref{s_spreadshape,s_strategychanges,s_quicksearch}). Our  main theorem introduced above is then proven in \cref{s_correctnessruntime}. \cref{s_lowerbounds} contains the lower bounds mentioned above, and in \cref{s_conclusions} we state questions for future work.

\section{Previous Work} 
\label{s_previous}
The problem of cleaning expanding domains is located within the field of \defi{robot motion planning}, which can itself be divided into several sub-fields dependent on the robot's computational capacities, the a priori knowledge it is given, and the kind of environment it finds itself in. 

In \defi{offline motion planning}, robots are in possession of all relevant information about the problem instance to solve, allowing them to plan their actions in advance, usually employing powerful computational capacities. A good example for this is the family of pursuit-evasion problems (for an overview, see \cite{fomin2008annotated}), also known as intruder search,  cops and robbers, or lion and man problems \cite{lionsdumitrescu, lionsberger}. In these problems, the space of the intruder's possible positions expands and needs to be contained, which can be seen as a rough analogon to the contamination in this article.

In \defi{online motion planning} scenarios, robots have to collect environment information at run time, for example by local sensor-based perceptions. Our scenario can clearly be located within this field, which however mostly considers static scenarios. A further field somewhat related to our work is online graph exploration (see, for example \cite{fomin2008annotated}), as our robots essentially explore dynamic grid graphs. However, in  graph exploration problems, in general, less use of geometrical properties is made. 



More precisely, our problem can be located in the range of \defi{mobile robot covering} problems, a form of terrain exploration requiring a robot to visit ("cover") all places within a given planar terrain. 
There are two common approaches to covering: \defirand{heuristic}Heuristic (for a recent example employing the common A* algorithm see\cite{6269300}) and \defirand{analytical}analytical, the latter of which aims for guaranteeing complete coverage. Finding optimal-lenght covering paths (also referred to as "lawn mower problem") is proven to be NP-Hard \cite{lawnmowingmilling} which leads to finding approximate solutions. In his 2001 survey \cite{choset2001coverage}, Choset classifies analytical coverage  approaches with respect to environments tesselated into a grid of square, close-to-robot-sized cells as \defi{approximate cellular decompositions}; see also \cite{galceran2013survey} for a more recent survey. 

To this category our strategy presented in this article can be counted. There exist offline approaches \cite{zelinsky1993planning} as well as online ones, e.g.  \cite{elmar,gabriely2002spiral,gabriely,gonzalez2005bsa}. There are also bio-inspired ways of covering making use of stigmergic information like for example exhibited by ants \cite{wagner1999distributed}.

To the best of our knowledge, we present the first online approach on cleaning expanding grid domains, without adding global information or accumulating knowledge. In addition, in particular when approached in an analytical way, covering is usually adressed with respect to static environments. In contrast, cleaning expanding domains can be seen as a dynamic variant of mobile robot covering.




\subsection{Cleaning static and expanding grid domains}

Cleaning of grid domains has been investigated first in both a static and a dynamic variant in a family of articles that serve as inspiration for our work \cite{ altshuler2005swarm, altshulershape, altshulerstatic, altshulerdynamic, altshulercomplexities}. 

In \cite{altshulerstatic} the special case of static contaminations without holes is addressed. The authors let robots traverse the boundary without central control, peeling off layers by cleaning any non-critical cell they encounter ("edge peeling"). 

The authors propose to reuse their strategy with slight variations on the dynamic variant of the problem and present upper and lower bounds on the cleaning time \cite{altshulerdynamic, altshulercomplexities, altshuler2005swarm}. Extending strategies from the static problem to its dynamic version turns out to be difficult: In particular one cannot neglect holes even if an initial contamination is required to be simply-connected because at any spread parts of the contamination boundary may grow together and create holes out of former boundary parts. Holes however impose serious challenges: 
Distinction between the outer contamination boundary and hole boundaries with local knowledge (\cref{p_views}) is non-trivial. It needs to be ensured that robots do not get stuck at holes while the outer contamination boundary expands exceedingly.


\begin{figure}%
\center
\includegraphics[width=5cm]{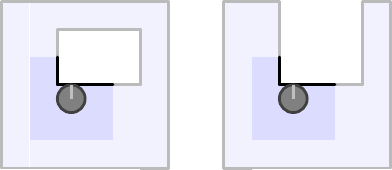}%
\caption{Even though the left robot sees a hole boundary, while the right one sees the outer contamination boundary, both robots have exactly the same perceptions.}%
\label{p_views}%
\end{figure}

In \cite{altshulerdynamic}, the authors disabled the existence of holes by not allowing them in initial contaminations and requiring the existence of a very helpful \emph{elastic membrane}.  The membrane is an automatically updated global data structure in the grid world that all agents share, see \cref{p_membrane}.  The robots traverse the membrane instead of the boundary and everything else basically stays the same. On the pro side, it enforces the contamination's simply-connectedness over time. On the other hand, the contamination's geometry can get arbitrarily complicated, see \cref{p_membrane}. In our work, we allow holes and only require an initial contamination's geometry to be relatively simple. We guarantee to maintain this simplicity and as a consequence, that no further holes emerge. We do not need a global data structure. The initial geometrical simplicity we require would not have helped in the related work at all, as in the original edge peeling strategies, such a simplicity is not maintained, so new holes can still emerge any time, see \cref{p_complexgeometry}.

\begin{figure}%
\center
\includegraphics[width=\columnwidth]{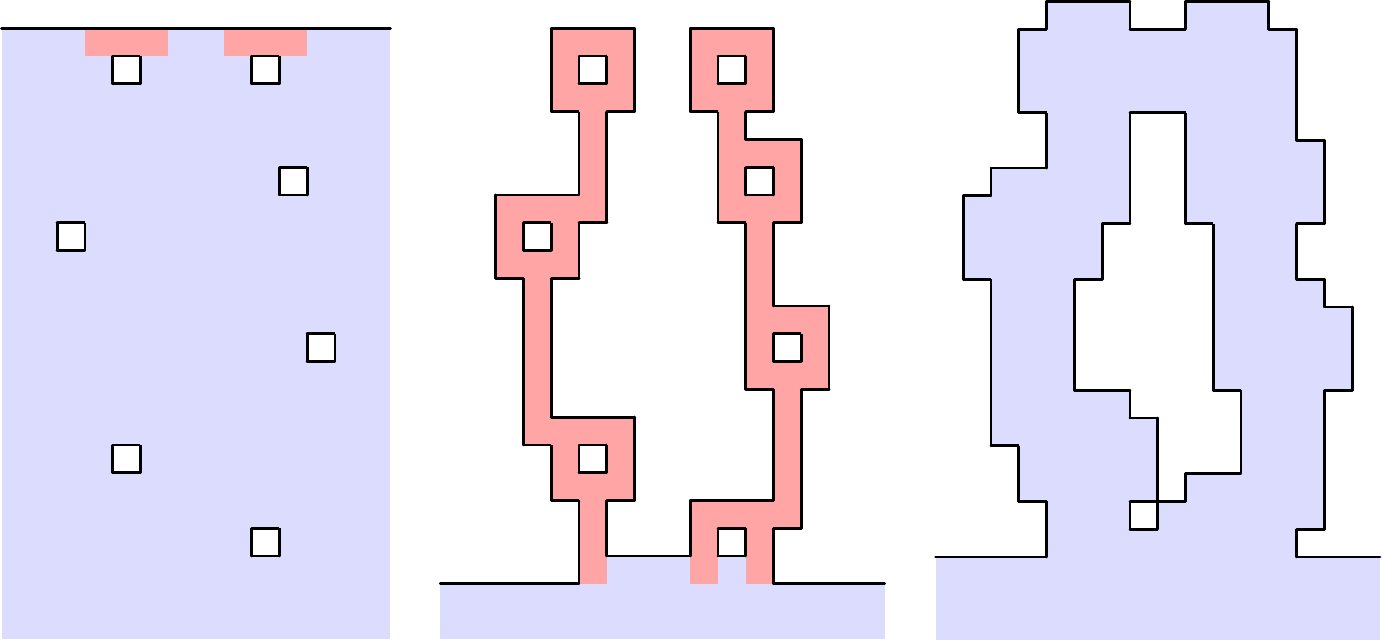}%
\caption{In the left image part, a section of a large, solid rectangular contamination with some minimal holes is illustrated. In the middle, the same section is illustrated after several "edge peeling" robot traversals. On the right side, we can see the spread outcome of the configuration in the middle.}%
\label{p_complexgeometry}%
\end{figure}

\begin{figure}%
\center
\includegraphics[width=\columnwidth]{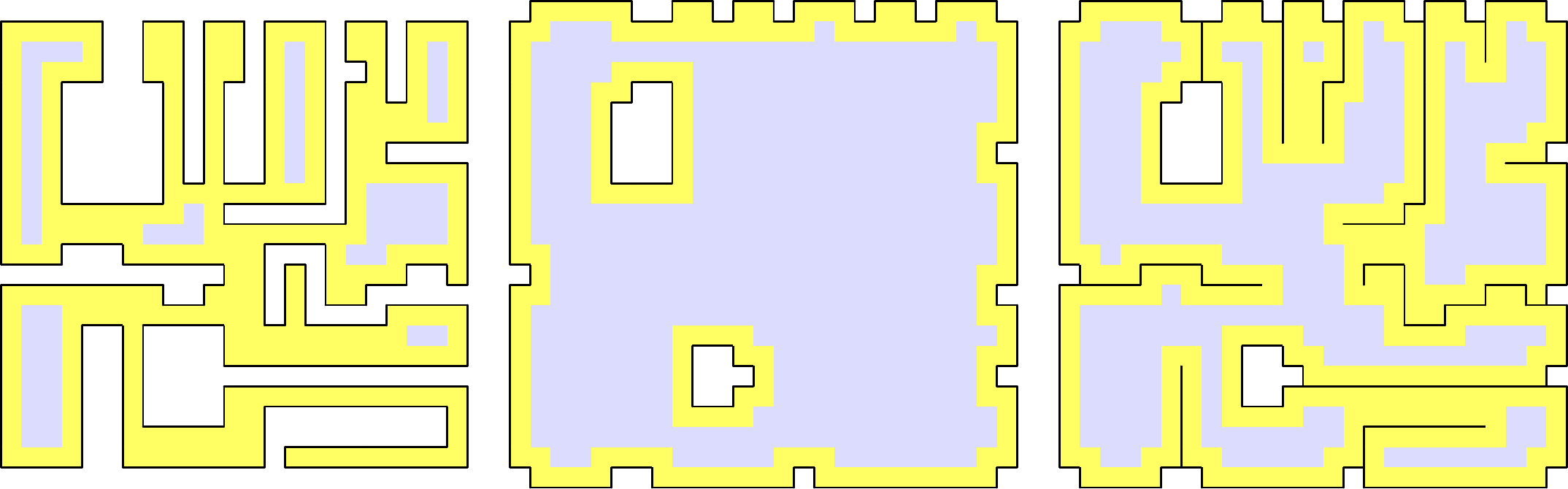}%
\caption{A contamination $C$ (left) and the spread outcomes of $C$ both without membrane (middle) and with membrane (right). Observe that, even though holes emerge, the contamination cannot grow together and stays simply-connected, if the elastic membrane restriction is placed. The membrane does not allow closed contaminated paths around emerging holes.}%
\label{p_membrane}%
\end{figure}


The authors mention that their strategy could work without a membrane if agents do not get stuck at holes like described above; In this work, we feel it necessary to ensure by only local means that this does never happen.

\section{How spreads change contaminations}
\label{s_spreadshape}

In order to prove that a contamination does not change the complexity of its geometry during a spread, we firstly need to define some fundamentals.
Let $C$ be a contamination and $a,b$ adjacent cells with $a\in C$ and $b\notin C$. 

\begin{definition}
The edge separating $a$ and $b$ is called a \defi{border edge} and $C$ is enclosed by a simple grid polygon, $\poly{C}$. 
\end{definition}
\begin{definition}
With respect to a clockwise traversal, grid polygons can be seen as an intersection-free closed sequence of \defi{atoms}, where atoms can be of type \emph{border edge}, \defi{right turn} and \defi{left turn}, see \cref{p_atoms}. Note that no turns are located next to each other. The \defi{length} of any such sequence is the number of border edges it contains.
\end{definition}

We use the following naming conventions. $\boundingbox{C}$ denotes the axis aligned \defi{bounding box} of $C$, and $\width{C}$ and $\height{C}$ the extension from west to east and north to south respectively, see \cref{p_initialcontamination}. 

\begin{figure}%
\center
\includegraphics[width=\columnwidth]{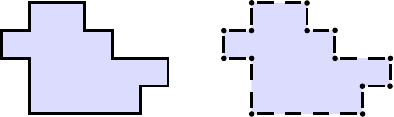}%
\caption{A contamination depicted as enclosed by a grid polygon on the left, and with the grid polygon as a sequence of atoms on the right. On the right, the line segments mark border edges, and the dots mark left and right turns (with respect to a clockwise traversal).}%
\label{p_atoms}%
\end{figure}

\begin{lemma}\label{lem1_compltransition_spread}
Let $C \in \contaminationset$. Let $D$ be the outcome of $C$ after a spread. Then, $D \in \contaminationset$.
\end{lemma}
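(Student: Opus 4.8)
The plan is to verify the three defining properties of $\contaminationset$ for $D$ in turn: connectivity, the four-monotone-chain outer boundary, and rectangularity of the holes. It helps to write $D = C \cup N(C)$, where $N(C)$ is the set of clean cells adjacent to $C$, so that $D$ is exactly the set of cells at grid distance $\le 1$ from $C$ and a spread only ever adds cells. Connectivity is then immediate: every cell of $D \setminus C$ is adjacent to $C$, and $C$ is connected.

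For the outer boundary I would first reduce to the hole-free case. Let $\hat C$ be $C$ with all its holes filled in. Each hole of $C$ is a clean rectangle completely surrounded by $C$, so every neighbour of a hole cell already lies in $\hat C$; hence the spread of $\hat C$ is just $D$ together with the original hole cells, and $D$ and the spread of $\hat C$ therefore differ only inside bounded, enclosed regions, so they share the same unbounded complement component and hence the same outer boundary. Moreover $\hat C$ is connected, and since the outer boundary of $C$ is built from four monotone chains (each met at most once, as a connected piece, by any axis-parallel line), the intersection of every row and every column with the interior of $\poly{C}$ is a single contiguous block of cells; that is, $\hat C$ is a connected \emph{HV-convex} polyomino, and such polyominoes are precisely those bounded by four monotone staircase chains. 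So it suffices to show: the spread of a connected HV-convex polyomino is again connected and HV-convex.

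The heart of the proof is this last claim. Connectivity is as above. For HV-convexity, fix a row $y$: a cell $(x,y)$ belongs to the spread iff one of $(x,y),(x\pm1,y),(x,y\pm1)$ lies in $\hat C$, so row $y$ of the spread is $J_y \cup \hat I_{y-1} \cup \hat I_{y+1}$, where $\hat I_j$ is the (interval) column-range of row $j$ of $\hat C$ and $J_y$ is $\hat I_y$ grown by one cell on each side. The key geometric fact is that in a connected HV-convex polyomino the nonempty rows form a contiguous range and two consecutive nonempty rows have overlapping column-ranges --- otherwise a path realizing connectivity between two such rows would have to take a vertical step between them, forcing a common column. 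Given this, if $\hat I_y \neq \emptyset$ then $\hat I_{y-1}$ and $\hat I_{y+1}$ each either meet $J_y$ or are empty, so the union is an interval; and if $\hat I_y = \emptyset$ then at most one of $\hat I_{y\pm1}$ is nonempty, so row $y$ of the spread is that single interval (or empty). Columns are symmetric. Thus the spread of $\hat C$ is connected HV-convex, its complement is a single unbounded region, and --- transporting back through $\hat C$ --- the outer boundary of $D$ is four monotone chains and $D$ has no holes beyond remnants of those of $C$. For those remnants: a cell of a hole $H$ (an $a\times b$ rectangle) stays clean in $D$ iff none of its four neighbours is contaminated, i.e. iff it lies in neither the first/last row nor the first/last column of $H$, so it survives exactly on the concentric $(a-2)\times(b-2)$ rectangle --- empty when $\min(a,b)\le 2$, and otherwise again a rectangle still fully surrounded by cells of $D$. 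Hence the holes of $D$ are precisely the nonempty shrunken holes of $C$, which are rectangular and pairwise disjoint; together with connectivity and the four-chain outer boundary this gives $D \in \contaminationset$.

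\textbf{The main obstacle} is the HV-convex step, and inside it the overlap claim for consecutive nonempty rows/columns together with the (slightly fiddly) boundary cases needed to conclude that the grown row is genuinely one interval when $\hat C$ is thin near row $y$. One must also pin down carefully that ``bounded by four monotone chains'' and ``connected and HV-convex'' are genuinely interchangeable --- the argument deliberately drops the chain language to reason about row/column convexity and then has to return to it --- and that filling in, and later re-excising, the rectangular holes really does leave the outer boundary untouched.
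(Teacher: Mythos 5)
Your proof is correct, but it takes a genuinely different route from the paper. The paper's own argument stays at the level of the enclosing grid polygon: it observes that two parts of $\poly{C}$ growing together under a spread would force a U-turn in $\poly{C}$, which four monotone chains cannot contain, so no new holes emerge and the spread of four monotone chains is again four monotone chains; it then notes, as you do, that rectangular holes shrink to (possibly empty) rectangles and that connectivity and finiteness are preserved. You instead translate the four-chain condition into orthogonal (HV-) convexity, fill the rectangular holes to reduce to the hole-free case, and prove closure under spreading row by row and column by column, using the overlap of consecutive nonempty rows (which you correctly derive from connectivity via the forced vertical step) to show that the union $J_y \cup \hat I_{y-1} \cup \hat I_{y+1}$ is a single interval; the hole-shrinkage analysis is then the same in both proofs. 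What each approach buys: the paper's argument is much shorter but leans on geometric assertions (the U-turn claim and the preservation of monotone chains are stated rather than derived), whereas your row/column bookkeeping is mechanical and checkable, at the price of invoking the equivalence between ``outer boundary consisting of four monotone staircases'' and ``connected HV-convex,'' which you flag but do not prove -- that equivalence is standard for simply connected grid polygons, and your explicit hole-filling/re-excision step makes the reduction to it legitimate, so no essential gap remains.
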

\begin{proof}
Contaminations in $\contaminationset$ are enclosed by four monotonic chains. No parts of $\poly{C}$ can grow together by a spread, as such configurations would require $\poly{C}$ parts growing towards each other and therefore  $\poly{C}$ to contain an U-turn. Therefore, no new holes can emerge and, as contaminated cells only contaminate their 4Neighborhood, $\poly{D}$ consists of four monotonic chains again.

After a spread, the rectangular holes have disappeared or are smaller and still rectangular. Furthermore, as $C$ contained finitely many contaminated cells, $D$ also does. Last, as $C$ was connected, $D$ is also connected. Hence, $D \in \contaminationset$.
\end{proof}

\begin{definition}
The \defi{circumference} of a contamination $C$, denoted as $\circum{C}$, is the length of the shortest closed path of cells $\in C$ that touches every border edge in $\poly{C}$.
\end{definition}

Note that dependent on $\poly{C}$'s shape, some cells may be visited more than once, see \cref{p_circumference}.
Because of the monotony of the four parts $\poly{C}$ consists of, for  $C \in \contaminationset$, we know:

\begin{lemma}\label{lem_circumferencelenght}
Let $C \in \contaminationset$. Then, $\circum{c} = 2\width{c} + 2\height{c}-4$. 
\end{lemma}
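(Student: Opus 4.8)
The plan is to prove the matching bounds $\circum{C}\ge 2w+2h-4$ and $\circum{C}\le 2w+2h-4$, where $w=\width{C}$, $h=\height{C}$, and I place $\boundingbox{C}$ on columns $1,\dots,w$ and rows $1,\dots,h$. Call a cell of $C$ a \emph{boundary cell} if it is incident to a border edge belonging to $\poly{C}$; every such edge separates exactly one cell of $C$ (its \emph{owner}) from one non-contaminated cell, so a closed path of cells of $C$ touches all border edges of $\poly{C}$ exactly when it visits every boundary cell. Hence $\circum{C}$ is the length of a shortest closed walk inside $C$ through all boundary cells.

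For the lower bound, $C$ meets all four sides of $\boundingbox{C}$, so the west edges of its column-$1$ cells, the east edges of its column-$w$ cells, the south edges of its row-$1$ cells and the north edges of its row-$h$ cells are all border edges of $\poly{C}$ (their outward neighbours leave the bounding box, hence lie outside $\poly{C}$, and are not hole cells). Consequently a closed walk realising $\circum{C}$ visits some cell in column $1$ and some cell in column $w$, so it makes at least $w-1$ steps raising and at least $w-1$ steps lowering the column index, i.e.\ at least $2(w-1)$ horizontal steps, and symmetrically at least $2(h-1)$ vertical ones; thus $\circum{C}\ge 2(w-1)+2(h-1)=2w+2h-4$.

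For the upper bound I first pass to the hole-free case. Since the rectangular holes lie strictly inside $C$ (their closures avoiding $\poly{C}$), filling all of them yields a solid contamination $C'$ with $\poly{C'}=\poly{C}$, the same corners, and exactly the same boundary cells as $C$. It thus suffices to build a closed walk in $C'$ of length $2w+2h-4$ through all boundary cells and to note it stays inside $C$. Traverse $\poly{C'}$ once and list the owner of each of its $2(w+h)$ border edges: along a straight stretch the owner moves to a $4$-neighbour at each successive edge; at a convex corner the two incident edges share their owner; and at a reflex corner the two owners are diagonal neighbours whose only common $4$-neighbour inside $\poly{C'}$ is the cell filling the concave notch. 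Deleting one copy of each doubled owner, and inserting at each reflex corner that common $4$-neighbour, turns the cyclic list into a genuine closed walk in $C'$ that still visits every boundary cell; each owner and each inserted cell is incident to, or shares a corner with, $\poly{C}$, hence (holes being strictly interior) lies in $C$, so the walk is legal in $C$. Its length is $2(w+h)$ reduced by the number of convex corners and increased by the number of reflex corners of $\poly{C'}$, and a simple rectilinear polygon has exactly four more convex than reflex corners, so the length equals $2(w+h)-4$. Together with the lower bound this gives $\circum{C}=2w+2h-4$.

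The step needing most care is the reflex-corner case of the construction: verifying that the two owners there are genuinely diagonal and that the cell inserted between them lies inside $\poly{C'}$ (hence, by strict interiority of the holes, in $C$). This is precisely where the hypothesis that $\poly{C}$ is made of four monotone chains is used — it forces every reflex corner to be a simple concave staircase step, around which exactly three of the four cells belong to $C'$ — and it is also why the identity fails for shapes containing U-turns, such as a plus sign. The degenerate cases $w=1$, $h=1$ (and $w=h=1$, where the walk is a single cell of length $0$) are immediate and should be noted.
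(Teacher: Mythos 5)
Your proof is correct, and it is considerably more than the paper offers: the paper states this lemma with no proof at all, presenting it as an immediate consequence of the monotonicity of the four chains of $\poly{C}$ (together with the fact, asserted in the introduction, that the outer boundary has exactly $2(\width{C}+\height{C})$ edges). You instead give a genuine two-sided argument: a lower bound from the observation that any closed walk touching border edges on all four sides of $\boundingbox{C}$ must have horizontal variation at least $2(\width{C}-1)$ and vertical variation at least $2(\height{C}-1)$, and an upper bound by turning the cyclic list of border-edge owners into a closed walk, merging the duplicate owner at each convex corner, inserting the notch cell at each reflex corner, and invoking the fact that a simple rectilinear polygon has exactly four more convex than reflex corners. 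This makes explicit where the hypotheses enter (the perimeter count $2(\width{C}+\height{C})$ uses the four monotone chains; the reflex-corner insertion uses that hole closures avoid $\poly{C}$), and it also covers the multiplicity issues (tails, width-one necks) that the paper only hints at via \cref{p_circumference}. One caveat: your claim that the holes are strictly interior (closures disjoint from $\poly{C}$) is asserted rather than derived, and it does not follow from the literal wording of \cref{def_contaminationset}; a rectangular hole whose corner diagonally touches a reflex corner of the outer boundary would make the inserted notch cell a hole cell, force a detour, and break the formula. The paper makes the same implicit assumption (the proof of \cref{lem_4layers} asserts that holes are surrounded by closed contaminated paths), so your argument is consistent with the paper's conventions, but this is the one place where it leans on an unstated property of $\contaminationset$ and deserves an explicit remark.
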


Also, obviously, the following holds:
\begin{observation}\label{observation_spread}
Let $C$ be a contamination. Let $D$ be its spread outcome. Then $\height{D} = \height{C} + 2$ and $\width{D} = \width{C} + 2$.
\end{observation}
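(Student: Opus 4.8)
The plan is to prove the two equalities directionally. I would split the boundary of $\boundingbox{C}$ into its four extreme sides (north, south, east, west) and argue that under a single spread each side moves outward by exactly one cell; this immediately yields $\height{D}=\height{C}+2$ and $\width{D}=\width{C}+2$. Note that nothing about membership in $\contaminationset$ (connectivity, monotone chains, rectangular holes) is needed, which matches the statement, phrased for an arbitrary contamination. The one preliminary I would spell out is what a spread produces: $D$ is precisely the set of cells that either lie in $C$ or are 4-adjacent to some cell of $C$, so every cell of $D\setminus C$ arises from a cell of $C$ by a single unit step north, south, east, or west.

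For the upper bound, fix the topmost row $R$ that meets $C$, i.e., the north side of $\boundingbox{C}$. A cell of $C$ can contaminate a cell strictly north of $R$ only if it already lies in $R$, and then only the cell immediately above it; hence no cell of $D$ lies more than one row north of $R$. Applying the identical reasoning to the other three extreme sides shows $\boundingbox{D}$ extends at most one cell beyond $\boundingbox{C}$ on each side, so $\height{D}\le\height{C}+2$ and $\width{D}\le\width{C}+2$. For the matching lower bound, by definition of the bounding box the row $R$ contains at least one contaminated cell $c$; the spread contaminates the cell directly above $c$, which sits one row north of $R$, so $\boundingbox{D}$ does reach one row north of $\boundingbox{C}$. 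By the same argument it also reaches one cell farther south, east, and west. Combining the two bounds gives the claimed $+2$ in each dimension.

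I do not expect a genuine obstacle here — this is the ``obviously'' the authors flag. The only points worth a clarifying sentence are that the four extreme cells witnessing the outward growth may be four distinct cells, and that a spread is 4-directional only, so the corner cells of $\boundingbox{D}$ need not themselves be contaminated; neither observation affects the dimensions of the box, and the directional argument above sidesteps both issues by never referring to corners.
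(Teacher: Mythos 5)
Your argument is correct and complete; the paper states this as an ``obvious'' observation and gives no proof at all, so your directional per-side argument (each extreme side of $\boundingbox{C}$ advances by exactly one cell, with the upper and lower bounds handled separately) is precisely the justification the authors leave implicit. Your side remarks --- that no structural hypotheses from $\contaminationset$ are needed and that the corners of $\boundingbox{D}$ need not be contaminated --- are both accurate and consistent with how the paper later uses this fact.
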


\begin{figure}%
\center
\includegraphics[width=4cm]{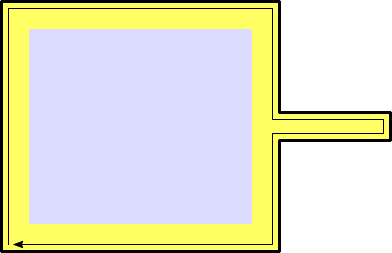}%
\caption{The circumference of a contamination $C$ is the number of steps an agent would have to perform in order to traverse the cells touching $\poly{C}$ completely.}%
\label{p_circumference}
\end{figure}

 From this, we also know  $\circum{D} = \circum{C} + 8$.
 Also note that by spreads, holes are retracted from the outer contamination boundary. In order to formalize this, we need another definition:
 \begin{definition}
Let the set of all cells touching $\boundingbox{C}$ be denoted as \defi{layer} 1. Let the set of 4Neighbors of layer 1 located to the inner side of layer 1 be denoted as layer 2, and so on until every cell in $\boundingbox{C}$  is assigned to a layer, see \cref{p_layers}.
\end{definition}

\begin{figure}%
\center
\includegraphics[width=55mm]{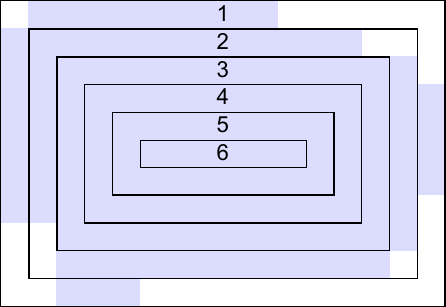}%
\caption{A contamination partitioned into layers with layer numbers depicted.}%
\label{p_layers}%
\end{figure}

\begin{lemma}\label{lem_4layers}
Let $C$ be a contamination and let $C$ contain holes. Let $D$ be the spread outcome of $C$. Then, the outmost hole cell in $D$ is located in a layer $\geq 4$, if such exists.
\end{lemma}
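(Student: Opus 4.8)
The plan is to track, through one spread, how far the outermost hole cell can be from the bounding box, starting from the assumption that $C \in \contaminationset$ so that all holes in $C$ are rectangular and, by \cref{lem1_compltransition_spread}, so are the holes in $D$. First I would fix a hole $H$ of $C$ and let $H'$ be its image in $D$; since contaminated cells spread only to their 4-neighborhood, every clean cell of $H'$ must be surrounded on all four sides by clean cells of $H$ at the previous time step, i.e.\ $H'$ is obtained from $H$ by "eroding" one cell from each of its four sides. Consequently a hole cell survives the spread only if it was at $\ell_\infty$-distance (within the rectangle) at least $2$ from every side of $H$ — equivalently, the surviving rectangle $H'$ is inset by one cell on each side. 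I would phrase this as: if a cell $c$ of $H'$ is at distance $k$ from the boundary of the bounding box of $H'$ in some direction, then the corresponding cell of $H$ was at distance $k+1$ from the boundary of $H$'s bounding box in that direction.

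Next I would relate the layers of $D$ to those of $C$ using \cref{observation_spread}: since $\width{D} = \width{C}+2$ and $\height{D} = \height{C}+2$, the bounding box of $D$ is exactly one cell larger on each of the four sides than that of $C$, so layer $i$ of $C$ becomes (a subset of) layer $i+1$ of $D$. Now chain the two observations: a clean cell of $H'$ in layer $j$ of $D$ sits, by the shift of bounding boxes, in layer $j-1$ of $C$; by the erosion argument it had a clean neighbor in $H$ strictly closer to the boundary of $C$'s bounding box, hence in layer $j-2$ of $C$, which is layer $j-1$ of $D$ — and iterating, the fact that the eroded rectangle is inset by a full cell on each side forces the pre-image cell to have been at layer index at least two smaller. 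Since in $C$ the outermost hole cell already lies in some layer $\ge 1$ (a hole consists of clean cells inside $\boundingbox{C}$, and the cells of layer $1$ touch $\boundingbox{C}$ so at least layer $\ge 2$ for a genuine interior hole; one should nail down the exact starting layer here), pushing this through the $+2$ shift and the $-1$ erosion inset on each side yields that the outermost surviving hole cell of $D$ is in a layer $\ge 4$.

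The main obstacle I expect is bookkeeping the off-by-one relationships carefully and consistently: distinguishing "distance from the bounding box of the whole contamination" from "distance from the bounding box of the individual hole," making sure the erosion really removes a full layer on each side (so the inset is $1$ in layer-index terms, which combined with the $+2$ bounding-box growth gives the net gain needed), and establishing the correct base case for the layer index of a hole cell in $C$ itself. A clean way to organize this is to argue directly: let $c$ be the outermost clean cell of a hole $H'$ of $D$, say in layer $j$; walk one step outward toward the near side of $\boundingbox{D}$ into a cell $c_1$; by choice of $c$, $c_1$ is contaminated in $D$, so in $C$ all four neighbors of the pre-image square of $c_1$ — in particular the pre-image of $c$ — lie in $C$... which contradicts $c$ being clean unless $c$'s pre-image was "deep enough," and quantifying "deep enough" against the two-cell growth of $\boundingbox{}$ gives $j \ge 4$. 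Once the inequalities are set up on a single small figure, the argument is short; the risk is purely in getting every increment right.
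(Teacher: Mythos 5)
Your proposal is correct and follows essentially the same route as the paper's proof: hole cells of $C$ lie in layer $\geq 2$, a surviving hole cell must have had all four neighbours clean in $C$ (equivalently, every layer-$2$ hole cell is contaminated via its contaminated layer-$1$ neighbour), so survivors sat in layer $\geq 3$ of $C$, and the one-layer shift of the bounding box turns this into layer $\geq 4$ of $D$; the rectangularity of holes that you invoke is not actually needed. Only your closing ``alternative organization'' contains a slip --- $c_1$ being contaminated in $D$ does not imply that all four neighbours of $c_1$ were contaminated in $C$ --- but your main erosion argument does not rely on it.
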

\begin{proof}
As holes are surrounded by closed paths of contaminated cells, the outmost hole cell in $C$ is in layer $\geq 2$. Layer 2 with respect to $C$ will be layer 3 with respect to $D$. Additionally, all cells in Layer 2 with respect to $C$ have a contaminated 4Neighbor. Thus, all hole cells in layer 2 with respect to $C$ get contaminated during the spread and the outmost remaining hole cell can only be in a layer $\geq 3$ with respect to $C$ or $\geq 4$ with respect to $D$.
\end{proof} 
 
 

\section{Cleaning strategy: Smart Edge Peeling}
\label{s_strategy}


The agent keeps an integer bearing counter for the turns performed, which is initialized with 0. It is increased at $90^\circ$ right turns and decreased at $90^\circ$ left turns. Note, that the bearing counter can only represent three values, so the limitations of the finite automaton robot model are not violated. Based on its perception of the eight cells around its position, an agent can decide whether or not its position is critical. 

We will now define cleaning strategy \emph{SEP}. In contradistinction to the related work, it will not clean every uncritical boundary cell it encounters to guarantee that the contamination's geometrical invariants are preserved. 
A supplementary video of an execution of the strategy can be found at
\url{http://tizian.informatik.uni-bonn.de/Video/smartedgepeeling.mp4}\,.

We assume that in any time step, first a spread takes place if $t = nd$ for $n \in \N$, and second the strategy is started (in the first time step) or resumed (in any further time step). Further, $d > 1$, so there cannot be a spread directly before the first time step. The strategy \defi{Smart Edge Peeling (SEP)} is presented formally in \cref{alg_smartedgepeeling}. The comments are meant to be read together with the below explanations. 
\lstset{escapeinside={(*}{*)}}
\begin{algorithm*}[tbp]
\caption{Strategy \emph{Smart Edge Peeling (SEP)}. Precondition: Agent starts on a contaminated cell. For the sake of readability, it is assumed that the status variables \texttt{bearingCounter} (incremented for each right turn and decremented for each left one) and \texttt{lastTurnWasRight} are maintained automatically. The status of  \texttt{bearingCounter} is assumed to be only updated in search mode.}
\label{alg_smartedgepeeling}
\begin{tabular}{p{\myalgowidth}p{\myalgonumberingwidth}p{\myalgowidth}}
\begin{lstlisting}
IF spreadDetected OR firstTimeStep: (*\label{ln_spreaddetected}*)
	#Initialize / reset status variables
	lastTurnWasRight = "false"; #Autom. maintained
	bearingCounter = 0; #Autom. maintained
	mode = "search"; #Can be "search" or "boundary"
	criticalCellPassed = "true";
	
IF locatedOnLastContaminatedCell:
	cleanCurrentLocation;
	terminate;

#BLOCK 1. Makes the agent turn right (*\label{ln_block1}*)
#and assesses if it has reached the boundary.
IF mode == "search" AND bearingCounter == 0: (*\label{ln_search}*)
	#Agent moves freely through the contamination. It
	#may be able to continue its movement or forced 
	#to turn by clean cells around it.
	IF frontContaminated: 
		#Contaminated cell ahead. No turn needed.
	ELSE IF rightContaminated: 
		#Clean cell ahead, contaminated one on the right.
		turnRight; (*\label{ln_search_turnright}*)
	ELSE IF rearContaminated:
		#Front and right side are clean; turn right two times. 
		#BC reaches 2, which cannot be caused by holes.
		#So, the agent switches to boundary mode and 
		#resets criticalCellPassed. This flag may be set true 
		#later dependent on the current position's criticality.
		turnRight; turnRight; mode = "boundary"; (*\label{ln_switchtoboundary1}*)
		criticalCellPassed = "false";
	ELSE IF leftContaminated:
		#Front, right and rear are clean. Turn right three  
		#times and analogously switch to boundary mode.
		turnRight; turnRight; turnRight; (*\label{ln_switchtoboundary2}*)
		mode = "boundary"; criticalCellPassed = "false";
ELSE IF mode == "search" AND bearingCounter == 1:
	#Agent follows a boundary to its left, but it cannot
	#know if the boundary belongs to a hole or not.
	IF leftContaminated: (*\label{ln_ifline}*)
		#Turn back to original orientation with BC 0.
		turnLeft; (*\label{ln_search_turnbackleft}*)
\end{lstlisting}
&\  &
\begin{lstlisting}
#BLOCK 1 continued. Be aware that the next line is an
#ELSE IF with respect to the IF in line (*\color{pseudocodecomment}{\textsl{\cref{ln_ifline}}}*).
	ELSE IF frontContaminated:
		#Left side still clean, contaminated cell ahead.
		#No turn needed, continue to follow boundary.
	ELSE IF rightContaminated:
		#Clean cells ahead and to the left. Agent turns
		#right again, knowing to have reached the boundary.
		turnRight; mode = "boundary"; (*\label{ln_switchtoboundary3}*)
		criticalCellPassed = "false";
	ELSE IF rearContaminated: 
		#Clean cells to the left, front and right. Agent
		#knows to have reached boundary, turns right twice.
		turnRight; turnRight; mode = "boundary"; (*\label{ln_switchtoboundary4}*)
		criticalCellPassed = "false";
ELSE IF mode == "boundary": (*\label{ln_boundary}*)
	#Agent knows it is at the boundary, follows it using 
	#left hand rule. After right turns, it resets 
	#criticalCellPassed (again: may be set true later).
	IF leftContaminated: 
		turnLeft; (*\label{ln_boundary_left}*)
	ELSE IF frontContaminated:
		#Agent just goes ahead, no turn needed (*\label{ln_boundary_front}*)
	ELSE IF rightContaminated:
		turnRight; criticalCellPassed = "false"; (*\label{ln_boundary_right}*)
	ELSE IF rearContaminated: 
		turnRight; turnRight; criticalCellPassed = "false"; (*\label{ln_boundary_rear}*) (*\label{ln_block1end}*)

#BLOCK 2: Cleaning and Movement. The agent is (*\label{ln_block2}*)
#allowed to clean if it is in boundary mode, the
#last turn was a right one and no critical cell 
#was passed after this right turn.
IF currentPositionCritical: 
		criticalCellPassed = "true";	(*\label{ln_critical_passed}*)
IF mode == "boundary": 
	IF lastTurnWasRight AND NOT criticalCellPassed:
		cleanCurrentLocation; (*\label{ln_cleaning_normal}*)
	ELSE IF inTailCell:
		#Tail cells are always cleaned when in 
		#boundary mode.
		cleanCurrentLocation; (*\label{ln_cleaning_tail}*)
moveForward;(*\label{ln_block2end}*)
\end{lstlisting}
\end{tabular}

\end{algorithm*}

Let the agent be deployed somewhere within a contamination $C \in \contaminationset$. W.l.o.g. let the agent be starting  northwards.

In \defi{search mode} (from \cref{ln_search} on), the agent does not clean any cell, but moves through the contamination searching for the outer boundary. If it encounters any boundary, it turns right (\cref{ln_search_turnright}), increasing its bearing counter. It cannot know if it has found a hole or the outer contamination boundary. As holes are rectangular, they will force the agent to perform a right turn. However, the agent will turn left and move northwards again later on, decreasing the bearing counter to 0 again (\cref{ln_search_turnbackleft}). In this manner, the agent will leave any hole encountered without doing any change to it, (\cref{p_lochausweichen}) eventually encountering the outer contamination boundary. Once the bearing counter reaches 2, the agent will know for sure to have reached the outer contamination boundary, and switch to \defi{boundary mode} (from \cref{ln_boundary} on). Let  $\searchtime{C}$ be the time the agent has spent until switching to boundary mode.

\begin{lemma} \label{lem_searchtime}
Let $C \in \contaminationset$. Then, $\searchtime{C} \leq \width{C} + \height{C}-2$.
\end{lemma}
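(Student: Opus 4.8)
The plan is to prove the stronger statement that, for as long as the agent is in search mode, it only ever steps one cell north or one cell east (with respect to its starting heading), never south or west, and never leaves the fixed bounding box $\boundingbox{C}$; the time bound then follows from a pigeonhole count of how many unit north/east steps fit inside a $\width{C}\times\height{C}$ box.

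\emph{Orientation bookkeeping.} First I would read off from \cref{alg_smartedgepeeling} that, throughout search mode, the bearing counter attains only the values $0$ and $1$: from counter $0$ the agent either keeps its heading, performs one right turn (counter $\to 1$), or performs two or three right turns --- and the last two cases immediately set \texttt{mode = "boundary"}; from counter $1$ it either keeps its heading, turns left (counter $\to 0$), or turns right and switches to boundary mode. Since the agent is assumed to start northwards, this shows that while it is in search mode a bearing counter of $0$ means ``facing north'' and a bearing counter of $1$ means ``facing east''.

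\emph{Only north/east moves.} Next I would run through the same case distinction together with the single \texttt{moveForward} at the end of the algorithm: in every branch in which the agent is still in search mode after Block~1, the cell it steps into is exactly the one whose contamination the selected guard tested --- with counter $0$ it moves north (guard \texttt{frontContaminated}) or turns right and moves east (guard \texttt{rightContaminated}); with counter $1$ it turns left and moves north (guard \texttt{leftContaminated}) or moves east (guard \texttt{frontContaminated}); all remaining branches switch to boundary mode. (The case in which no $4$-neighbour is contaminated cannot occur, since then the agent would be on the last contaminated cell, which is handled earlier.) Hence every move made before switching is a unit step north or east into a contaminated cell, and as the agent starts on a contaminated cell it stays on contaminated cells --- hence inside $\boundingbox{C}$ --- throughout search mode.

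\emph{The count, and the one subtlety.} If the agent makes $k$ moves before switching, then by the above each move raises its column index or its row index by exactly $1$ while the column index stays in $\{1,\dots,\width{C}\}$ and the row index in $\{1,\dots,\height{C}\}$; hence $k\le(\width{C}-1)+(\height{C}-1)=\width{C}+\height{C}-2$, and since every search-mode step is either such a move or the switch itself, the agent is forced out of search mode after at most that many moves, giving $\searchtime{C}\le\width{C}+\height{C}-2$. The one place this argument must be protected is that a spread resets the bearing counter (\cref{ln_spreaddetected}), which would both invalidate the ``counter $0 \Leftrightarrow$ facing north'' correspondence and enlarge the bounding box; I expect this to be the only genuine obstacle. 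I would close it by a short bootstrap: the time steps $1,\dots,d-1$ are spread-free, and $\width{C}+\height{C}-2<d-1$ because $d\ge 3(\width{C}+\height{C})+6$, so running the count above on this spread-free initial segment already forces the switch to boundary mode to happen within it; therefore no reset ever occurs while the agent is in search mode, and the correspondence and the fixed bounding box are legitimate throughout, which is all the argument used.
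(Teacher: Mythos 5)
Your proof is correct and follows essentially the same route as the paper, whose one-line argument is exactly your main claim: in search mode the agent moves monotonously north and east inside $\boundingbox{C}$, so at most $\width{C}+\height{C}-2$ moves fit before it must switch to boundary mode. Your additional bootstrap ruling out a spread (and hence a bearing-counter reset) during the search, via $d \geq 3(h+w)+6$, is a detail the paper silently glosses over, so it is a welcome tightening rather than a divergence.
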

\begin{proof}
Follows from the fact that the agent has been moving monotonously towards the east and the north (\cref{p_lochausweichen}).
\end{proof}

\begin{figure}%
\center
\includegraphics[width=\columnwidth]{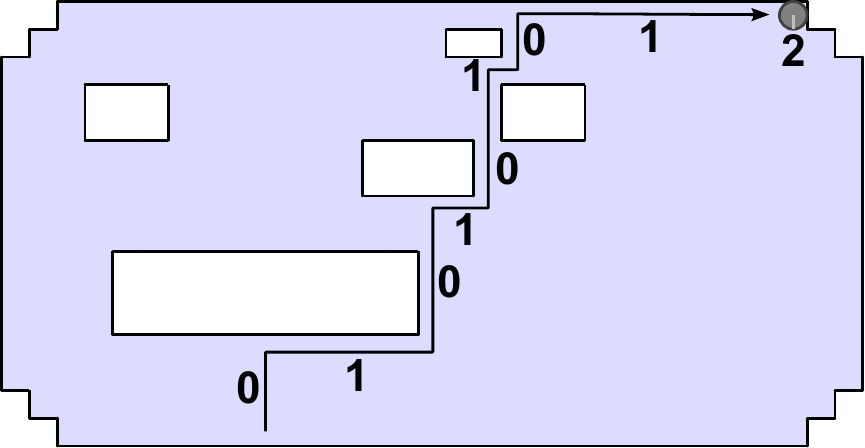}%
\caption{An example search mode trajectory that initially performed by an agent using \emph{SEP}. As the agent encounters several holes, it's bearing counter changes between 0 and 1. The first time the agent's bearing counter reaches 2, the agent switches to boundary mode.}%
\label{p_lochausweichen}%
\end{figure}

In boundary mode, the agent will follow the boundary using left-hand rule.   Before we describe the boundary mode in detail, we need  further definitions:
\begin{definition}\label{def_traversal}
Let $C \in \contaminationset$ and an agent be in boundary mode in $C$. Then we call the following $\circum{C}$ steps the agent performs one \defi{traversal}, if no spread occurs within this time period.
\end{definition}


\begin{definition}
We call a contaminated cell that touches at least three border edges a  \defi{tail}.
\end{definition}

We have to use the time dependence on $C$ for this definition, as we cannot easily put a cell-based traversal definition, for the agent might clean cells and reduce the circumference during a traversal.

Only in boundary mode, cells are cleaned, and critical cells are omitted from cleaning. This immediately leads the following lemma:
\begin{lemma}
\emph{SEP} does not destroy a contamination's connectivity.
\end{lemma}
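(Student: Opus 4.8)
The plan is to argue directly from the definition of \emph{critical} cells and the cleaning rule in BLOCK~2 of \cref{alg_smartedgepeeling}. The only way connectivity of the contamination could be destroyed is by a cleaning action, since spreads only add cells (and by \cref{lem1_compltransition_spread} keep us inside $\contaminationset$, which consists of connected contaminations). So it suffices to show that whenever \emph{SEP} executes \texttt{cleanCurrentLocation} on a cell $c$, the contamination $C \setminus \{c\}$ remains connected.

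First I would observe that cleaning only happens in boundary mode, via \cref{ln_cleaning_normal} or \cref{ln_cleaning_tail}. In the case of \cref{ln_cleaning_normal}, the guard is \texttt{lastTurnWasRight AND NOT criticalCellPassed}; and since \cref{ln_critical_passed} sets \texttt{criticalCellPassed} to \texttt{"true"} whenever the current position is critical (this check is executed in every time step, before the cleaning block), the cell $c$ being cleaned is necessarily non-critical. By the definition of criticality, for any two contaminated cells $a,b$ in the $3\times 3$ neighborhood $S$ of $c$ there is a contaminated path from $a$ to $b$ avoiding $c$. The key step is then to upgrade this local statement to a global one: if every pair of contaminated neighbors of $c$ can be reconnected through contaminated cells without using $c$, then removing $c$ from the whole connected set $C$ leaves it connected. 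This is the standard topological-thinning fact from pixel-space filling (as cited, \cite{henrich}): any contaminated path in $C$ that passed through $c$ entered and left $c$ through two cells of $S$, and can be locally rerouted around $c$; concatenating such reroutings shows $C\setminus\{c\}$ is connected. I would state this as the crux and give the short rerouting argument.

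For the case of \cref{ln_cleaning_tail}, the cell $c$ is a tail, i.e.\ it touches at least three border edges, so it has at most one contaminated $4$-neighbor. A cell with at most one contaminated neighbor is never critical (there are no two distinct $a,b\in S$ whose connections are all forced through $c$ — any path through $c$ would have to both enter and leave through the single contaminated neighbor, which is impossible for $a\neq b$, or trivially $a=b$). Hence removing a tail cell cannot disconnect $C$ either: the unique neighbor, if any, stays connected to the rest of $C$ via a path not using $c$ (namely the one that existed; it did not go through the degree-$\le 1$ vertex $c$ except as an endpoint). So in both cleaning cases, connectivity is preserved, and since this holds at every cleaning step and spreads preserve connectivity, it holds throughout the whole execution.

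The main obstacle I expect is making the "local criticality implies global non-disconnection" step fully rigorous rather than hand-wavy: one must be careful that rerouting one path around $c$ does not interfere with another, and that the $3\times 3$ window genuinely captures every way a contaminated path can transit through $c$ (it does, since a path through $c$ uses exactly the two $4$-neighbors of $c$ on that path, both of which lie in $S$). Everything else — that cleaning occurs only in boundary mode, that \cref{ln_critical_passed} correctly flags critical positions before the cleaning guards are evaluated, and that tails have degree $\le 1$ — is a direct reading of the algorithm and the definitions, so I would keep those parts brief.
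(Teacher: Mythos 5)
Your proposal is correct and follows exactly the argument the paper intends: the paper offers no explicit proof, asserting the lemma is immediate from the fact that only non-critical cells are ever cleaned, and your write-up simply fills in the details (the local-to-global rerouting via the $3\times3$ window, and the observation that the separate tail-cleaning branch in \cref{ln_cleaning_tail} is harmless because a cell touching three border edges has at most one contaminated $4$-neighbor and hence can never be a cut vertex). Both of these elaborations are sound and worth making explicit, but they do not constitute a different route from the paper's.
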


However, cleaning is controlled even more carefully (see \cref{p_cleanstartstop}): The agent starts a cleaning phase after it performed a right turn on an uncritical cell (\cref{ln_boundary_right,ln_boundary_rear}), which may even happen together with switching to boundary mode (\cref{ln_switchtoboundary1,ln_switchtoboundary2,ln_switchtoboundary3,ln_switchtoboundary4}). Please note, that in these lines, no criticality check is performed. This is done later in the strategy in  \cref{ln_critical_passed}. A cleaning phase is stopped when the agent passes left turns (\cref{ln_boundary_left}) or critical cells (\cref{ln_critical_passed}). Additionally, in boundary mode,  the agent cleans its current position independently from cleaning phases, if it is located within a tail.

\begin{figure}%
\center
\includegraphics[width=4cm]{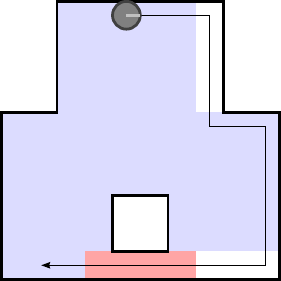}%
\caption{Two agent cleaning phases. The white cells are cleaned as an agent using \emph{SEP} in boundary mode follows the depicted trajectory. The upper phase is stopped when the agent traverses a left turn, the lower one when a critical cell is traversed. Tail cleaning is not depicted here; we will examine what it is good for later.}%
\label{p_cleanstartstop}
\end{figure}


The above version of the strategy performs a full reset if a spread is recognized (\cref{ln_spreaddetected}). Spreads are recognized if a clean cell located in an agent's perception range becomes contaminated. In case all cells in an agent's perception range have been contaminated before a spread, an agent will be unable to recognize a spread -- however, in this case the agent moves freely through the contamination in search mode, in which spreads are not relevant for its behavior.




\section{How Smart Edge Peeling changes contaminations}
\label{s_strategychanges}

We have already seen that $\contaminationset$ is closed with respect to spreads. Thus, it remains to examine $\contaminationset$ is also closed with respect to \emph{SEP} cleaning operations. In order to provide an answer to this question, we will now examine when cleaning phases in a contamination $C \in \contaminationset$ are started and stopped. Then, we will analyze for one single contaminated cell that is being cleaned by an agent, how $\poly{C}$ can possibly be changed. 

In the following lemma and proof, we will show that \emph{SEP} only cleans a cell when it finds itself in one of the eight situations depicted in \cref{p_cleaningcases}.

\begin{figure}%
\center
\includegraphics[width=7cm]{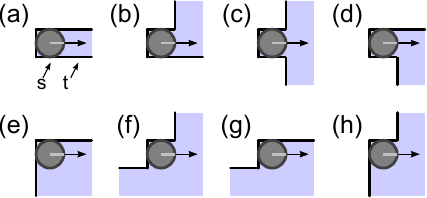}%
\caption{The situations agent cleaning is performed in. A new left turn is only created in situation (e). Situations (a) -- (d) are the four possible tail situations, except for rotational symmetry.}%
\label{p_cleaningcases}%
\end{figure}

\begin{lemma}\label{lem_compltransition_antsweep}
Let $C \in \contaminationset$. Let \emph{SEP} clean a cell of $C$, yielding contamination $D$. Then, $D \in \contaminationset$. 
\end{lemma}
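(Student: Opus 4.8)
The plan is to verify that each defining property of $\contaminationset$ (connectivity, outer boundary consisting of four monotonic chains connecting the bounding-box extremes, finiteness, and only rectangular holes) is preserved when \emph{SEP} cleans a single cell. Connectivity is already handled: the preceding lemma shows \emph{SEP} never cleans a critical cell, so $D$ stays connected; finiteness is immediate since $D\subseteq C$ has fewer cells. So the real content is that cleaning one cell cannot create a non-monotonic outer chain and cannot create a non-rectangular hole (or a brand-new hole not touching the outer boundary). The first step is therefore to pin down exactly which local configurations \emph{SEP} can clean in, and I would structure the argument around the eight cases in \cref{p_cleaningcases}: four tail configurations (a)--(d) and the four ``right-turn then straight'' cleaning-phase configurations, of which only (e) introduces a new left turn into $\poly{C}$. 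Establishing that these eight are the only possibilities is the key reduction — it follows by inspecting Block 1 and Block 2 of \cref{alg_smartedgepeeling}: cleaning happens only in boundary mode, and only either (i) when \texttt{inTailCell}, or (ii) when \texttt{lastTurnWasRight} and \texttt{criticalCellPassed} is false, i.e. on the cell of a right turn or on a straight stretch immediately following one, before any left turn or critical cell is met.

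Next, for each of the eight configurations I would trace how the atom sequence of $\poly{C}$ changes when that cell is removed, using the atom language (border edge / left turn / right turn, no two turns adjacent). Removing a cell replaces a short local arc of $\poly{C}$ by another short arc; in each case one checks locally that (1) the four monotonic chains stay monotonic — a chain can only lose atoms or acquire at worst a single new left turn in case (e), and I need to argue that new left turn does not break the ``monotone staircase'' structure because the cell being cleaned lies on a boundary the agent reached with \texttt{bearingCounter} forcing it onto the \emph{outer} chain, not a hole; and (2) the holes present in $C$ stay rectangular in $D$. For holes, the point is that \emph{SEP} in search mode provably never modifies a hole (by the discussion around \cref{p_lochausweichen} and \cref{lem_searchtime}), and in boundary mode it follows the \emph{outer} boundary, so cleaning near a hole could only merge the cleaned cell's vacancy with a hole or with the exterior — but a cell adjacent to a hole being cleaned in boundary mode would have to be critical (it separates hole from exterior or hole from the rest of $C$), hence is skipped; and tail cleaning, which ignores criticality, happens only at tails (three border edges), which by the four-monotone-chain shape sit at the extreme corners of the outer boundary, far from holes. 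I would also invoke \cref{lem_4layers}: after a spread holes are deep inside (layer $\ge 4$), so between spreads the agent peeling the outermost layers simply cannot be adjacent to a hole cell at all, which kills the hole-interaction case outright.

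Finally I would assemble these local checks into the conclusion: each cleaning step sends $C\in\contaminationset$ to a $D$ that is connected, finite, bounded by four monotonic chains joining the bounding-box extremes, and whose holes are rectangular (possibly smaller, possibly gone) — hence $D\in\contaminationset$. The main obstacle I anticipate is case (e): showing that the single new left turn created on the outer boundary is consistent with monotonicity of the affected chain, i.e. that it is a ``concave step'' of a staircase rather than a turn that would force the chain to reverse direction. This needs a careful argument that the cell cleaned in a cleaning phase lies strictly between the right turn that started the phase and the next left turn, so removing it merely lengthens a staircase by one step in the same monotone sense; the bookkeeping in \texttt{bearingCounter}, \texttt{lastTurnWasRight} and \texttt{criticalCellPassed} is exactly what guarantees this, and making that precise is where most of the work will go.
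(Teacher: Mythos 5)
Your proposal follows essentially the same route as the paper's proof: reduce to the eight local configurations of \cref{p_cleaningcases} (cleaning occurs only in boundary mode, in a tail or during a right-turn-initiated cleaning phase before any left turn or critical cell), then check case by case that removing the cell neither breaks the four monotonic chains nor disconnects the contamination nor alters the holes. The paper pins down those eight configurations by the same neighbor-by-neighbor elimination you identify as the key step (north, northwest and west neighbors clean; south/southwest/southeast constrained by U-turn-freeness and non-criticality), so the two arguments coincide in substance.
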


\begin{proof}

Let $C$ consist of more than one cell, otherwise the agent would clean this cell and terminate. Let $s$ be a cell in $C$, let $t$ be a contaminated 4Neighbor of $s$ and let an agent using \emph{SEP} clean $s$ moving to $t$. W.l.o.g. let $t$ be $s$'s east neighbor.
By \cref{s_strategy} the agent starts cleaning following the boundary right hand rule after it turned right in an uncritical boundary cell before. It stops cleaning when traversing critical cells or turning left\footnote{Note: From \cref{alg_smartedgepeeling} one can derive that, if the agent is not located on the last contaminated cell, turns  (block 1, Lines \ref{ln_block1} to \ref{ln_block1end}) are performed before cleaning (block 2, Lines \ref{ln_block2} to \ref{ln_block2end}).}. Also, an agent may clean if located in a tail.

Then, $s$'s north neighbor is clean, as otherwise the agent would not be moving east. Further, $s$'s northwest neighbor is clean, otherwise there would be a U-turn in $\poly{C}$ and hence $C \notin \contaminationset$. As the agent either starts a cleaning phase turning right or continues a cleaning phase or it is located in a tail, $s$'s west neighbor is clean.
It remains to examine the possible contamination states of $s$'s southwest, south, southeast end northeast neighbors, which place constraints on each other.

If $s$'s south neighbor is clean, the southwest one also must be clean, otherwise there would again be a U-turn in $\poly{C}$ and $C \notin \contaminationset$. If the south neighbor is contaminated, the southeast must be contaminated, too, otherwise $s$ would be critical, a contradiction to our assumptions. All the remaining situations are depicted in \cref{p_cleaningcases}; they include all in which the agent is located in a tail. In none of the situations, $\poly{C}$ is deformed in a way it does not consist of four monotonic chains.

Furthermore, the agent does not destroy a contamination's connectivity and does not change the shape of holes. All criteria in \cref{def_contaminationset} are preserved, and $\contaminationset$ is  closed with respect to \emph{SEP} cleaning operations.
\end{proof}

The combination of \cref{lem_compltransition_antsweep,lem1_compltransition_spread} guarantees that across the whole runtime, we never have to deal with other contaminations than the ones in $\contaminationset$. Also, as there can never grow together parts of a contamination enclosing polygon during a spread, no new holes can emerge. We sum up:

\begin{corollary} \label{cor_noconditionschanged}
Let $C \in \contaminationset$ be an initial contamination. Let $D$ be a later contamination resulting of spreads and / or \emph{SEP} cleaning operations on $C$. Then, $D \in \contaminationset$ and no new holes did emerge at any spread that may have happened in between. 
\end{corollary}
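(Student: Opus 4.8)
The plan is a straightforward induction over time, feeding the two closure lemmas already at hand into each other. The key observation is that, throughout the execution, the contamination changes only by a finite sequence of \emph{elementary} operations: in each time step at most a spread occurs (precisely when $t=nd$), followed by at most one single‑cell cleaning operation performed by the agent (\cref{ln_cleaning_normal} or \cref{ln_cleaning_tail} of \cref{alg_smartedgepeeling}). Hence the contaminations produced over the runtime form a chain $C=C_0,C_1,C_2,\dots$, where each $C_{i}\to C_{i+1}$ is either (i) a spread, or (ii) a \emph{SEP} cleaning of a single cell of $C_i$, and the target contamination $D$ equals $C_k$ for some $k$.

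First I would set up the induction. The base case $C_0=C\in\contaminationset$ holds by hypothesis, and no spread has yet occurred, so the ``no new holes'' part is vacuous at $i=0$. For the inductive step, assume $C_i\in\contaminationset$ and that no new hole has emerged up to step $i$. If $C_i\to C_{i+1}$ is a spread, then \cref{lem1_compltransition_spread} gives $C_{i+1}\in\contaminationset$; moreover, the proof of that lemma shows that, because $\poly{C_i}$ consists of four monotonic chains, no two parts of $\poly{C_i}$ can grow together during the spread (that would force a U‑turn in $\poly{C_i}$), so no new hole is created at this spread. If instead $C_i\to C_{i+1}$ is a cleaning operation, it is by construction a \emph{SEP} cleaning operation on a cell of $C_i$, so \cref{lem_compltransition_antsweep} applies and yields $C_{i+1}\in\contaminationset$; the same lemma shows the cleaning does not change the shape of holes, so again no new hole appears. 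If a single time step carries out both a spread and a cleaning, we apply the two lemmas in that order to the intermediate contamination. This closes the induction; taking $i=k$ gives $D\in\contaminationset$ together with the claim that no spread in between introduced a new hole. (Should the agent clean the very last contaminated cell and terminate, $D=\emptyset$ and the statement is trivially true; otherwise $D$ is a genuine member of $\contaminationset$.)

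The only point needing a little care is the bookkeeping: one must be sure that \emph{every} modification of the contamination occurring while \cref{alg_smartedgepeeling} runs is covered by exactly one of the two lemmas, i.e.\ that the agent never cleans outside the situations for which \cref{lem_compltransition_antsweep} was proved (in boundary mode, having just turned right on an uncritical cell, or located in a tail). This is precisely how \cref{lem_compltransition_antsweep} was stated, so there is no real obstacle here — the corollary is just the combination of \cref{lem1_compltransition_spread,lem_compltransition_antsweep} propagated along the whole execution trace.
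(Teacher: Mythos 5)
Your proposal is correct and matches the paper's own argument, which likewise obtains the corollary by propagating \cref{lem1_compltransition_spread} and \cref{lem_compltransition_antsweep} along the execution trace (the paper states this combination informally rather than as an explicit induction, and derives the ``no new holes'' claim from the same U-turn/monotonic-chain observation you cite). Your added bookkeeping about elementary operations and the terminal empty-contamination case is a harmless refinement, not a different route.
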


We examined earlier how spreads carry out influences on width, height and circumference of contaminations. Now, we need to do the same for agents.  Note that \emph{SEP} cleaning operations never increase a contamination's width and height, as an agent never contaminates cells. Because $\circum{C} = 2\width{C} + 2\height{C} -4$ the circumference is never increased as well. We now examine how they actually get decreased. For this, we need another definition.

\begin{definition}
Let us define an \defi{ear} as a maximal  strip of contaminated cells, each adjacent to the same side of $\boundingbox{C}$.
\end{definition}

\begin{figure}%
\center
\includegraphics[width=45mm]{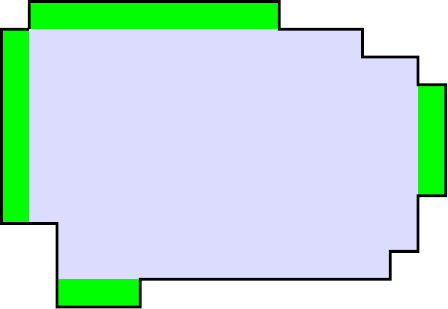}%
\caption{A contamination with ears painted green.}%
\label{p_earintroduction}%
\end{figure}

Ears are depicted in \cref{p_earintroduction}. We use the following naming convention. If an ear is adjacent to the north side of $\boundingbox{C}$, we call it a \defi{north ear}, and so on. Note that in contaminations in $\contaminationset$ there can be no more than one ear per compass direction, because otherwise there would exist an U-turn in $\poly{C}$ between ears touching the same $\boundingbox{C}$ side. Further note, that ears can also overlap, i.e., contaminated cells may belong to more than one ear. For instance, in a contamination consisting of a single cell, the cell marks all four ears. 

\begin{figure*}%
\center
\includegraphics[width=\textwidth]{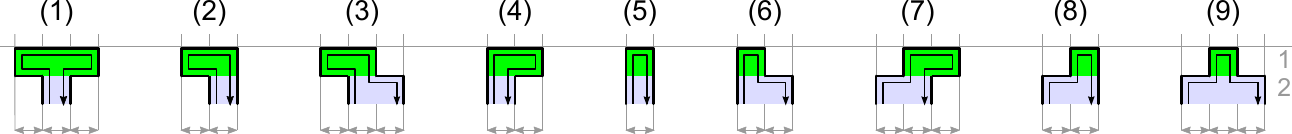}%
\caption{All possible north ear configurations except for stretching. Depicted are the first two layers of the north side of $\boundingbox{C}$ of a simply-connected contamination $C$. The grey, horizontal line marks the north side of $\boundingbox{C}$. The equally colored numbers on the right mark the layer numbers. Ear cells are depicted green. Columns surrounded by grey, vertical lines and marked with a $\leftrightarrow$ at the bottom of the figure can have an arbitrary width of $\geq 1$ units. The arrow trajectories mark how the ears are traversed by agents using \emph{SEP} in boundary mode. Note that while some of the ear configurations might be intuitively considered as impossible in contaminations $\in \contaminationset$, actually all of them can occur. Still, the ears whose first or last turn is a left one do impose constraints on the shape of the rest of the contamination. For example, in contaminations $\in \contaminationset$, type (1) ears can exist; but not all of the four ears can be of type (1); the leftmost and rightmost cells of the ear would be $C$'s west and east ears.}%
\label{p_aabbconfigs}%
\end{figure*}

\begin{lemma}
\label{lem1_aabbcircumchange_simplyconnected_antsweep}
Let $C \in \contaminationset$, let the outmost hole cell in $C$ be in a layer $\geq 3$. Let $D$ be the contamination after an agent using \emph{SEP} performed one traversal on $C$. Then, $\width{D} \leq \width{C}-2$ and $\height{D} \leq \height{D}-2$, respectively, at least four ears have been cleaned.
\end{lemma}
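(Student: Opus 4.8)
The plan is to analyze what happens to each of the four ears during a single traversal, and show that the agent cleans each ear's first layer in full (or at least shrinks the bounding box in that direction), so that after the traversal the width and height have each dropped by at least $2$. I would organize the argument ear by ear; by symmetry it suffices to treat the north ear and argue that after the traversal no contaminated cell of the original layer-$1$ north segment survives. Since the bounding box has a layer-$1$ cell on the north side (the ear), removing all of them forces the topmost remaining row to be at least two layers further south: one layer because the old layer-$1$ row is gone, and a second because, by \cref{lem_4layers} together with the hypothesis that the outmost hole cell of $C$ lies in a layer $\geq 3$, there is a solid contaminated row in layer $2$, so the new bounding box shrinks by exactly the two removed boundary layers — giving $\height{D} \leq \height{C} - 2$, and symmetrically $\width{D} \leq \width{C} - 2$.

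The core step is therefore: \emph{during one traversal the agent cleans every layer-$1$ cell of each ear.} Here I would invoke \cref{lem_compltransition_antsweep} and the case analysis of \cref{p_cleaningcases}, plus the ear-configuration catalogue of \cref{p_aabbconfigs}. The agent enters an ear by a right turn (switching into, or already in, a cleaning phase), walks along the ear's outer row cleaning each cell, and leaves by a right turn at the far end; the only events that stop a cleaning phase are left turns and critical cells. So I must check, for each of the ear types in \cref{p_aabbconfigs}, that (i) the agent's boundary traversal does indeed run along the full layer-$1$ row of the ear, (ii) no cell in that row is critical at the moment the agent stands on it — this uses that the layer-$2$ row underneath is solid, which is exactly where the ``outmost hole cell in layer $\geq 3$'' hypothesis is used — and (iii) if the ear has a left turn as its first or last atom, the tail-cleaning rule (\cref{ln_cleaning_tail}) still removes the otherwise-skipped extreme cell, since such a cell touches three border edges and is a tail. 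Because there is at most one ear per compass direction (as noted after \cref{p_earintroduction}) and the four ears are encountered in cyclic order during the traversal, the agent handles all four within the $\circum{C}$ steps of the traversal; combined with \cref{lem_compltransition_antsweep} the resulting $D$ is still in $\contaminationset$, so the layer/bounding-box bookkeeping above is legitimate. This also yields the last clause, ``at least four ears have been cleaned.''

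I expect the main obstacle to be step (ii)–(iii): ruling out that a cell of the ear's outer row is critical at the wrong moment, and checking the awkward ear types whose first or last turn is a left turn, where naive edge-peeling would leave the corner cell behind and only the tail-cleaning clause saves the bound. One has to be careful that after the agent has cleaned part of the ear, the \emph{remaining} contamination still forces the agent's left-hand rule to march it along the rest of the original layer-$1$ row rather than to ``fall off'' early, and that the hypothesis on hole depth (rather than just $C \in \contaminationset$) is genuinely needed precisely to keep the layer-$2$ row solid so that criticality never triggers prematurely and so that the bounding box contracts by the full $2$. A minor point worth stating explicitly is that if an ear has width $1$ (a tail sticking out), tail-cleaning alone disposes of it, which is the base case of the case analysis.
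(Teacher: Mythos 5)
Your core argument---that during one traversal the agent fully cleans the layer-1 cells of each of the four ears, checked case by case against the ear catalogue of \cref{p_aabbconfigs}, with the hypothesis on hole depth keeping layer 2 solid under each ear so that criticality never blocks a cleaning phase, and with tail-cleaning (\cref{ln_cleaning_tail}) rescuing the ear types whose first or last atom is a left turn---is essentially the paper's argument, which carries out this check explicitly for two representative configurations and declares the remaining ones analogous.

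However, your bounding-box accounting contains a genuine error. You attribute the full reduction $\height{D}\leq\height{C}-2$ to the north ear alone, claiming that once the layer-1 north cells are gone the topmost remaining row lies ``two layers further south'' because layer 2 is solid. That is backwards: a solid contaminated layer 2 means that after the north ear is cleaned, the north side of the bounding box moves in by exactly \emph{one} unit, not two---the agent never touches layer 2 during the traversal. The correct bookkeeping, which is the one the paper uses, is that each cleaned ear contributes one unit: the north and south ears together yield $\height{D}\leq\height{C}-2$, and the east and west ears together yield $\width{D}\leq\width{C}-2$. Since you do establish that all four ears are cleaned, the conclusion is recoverable, but the step as written is false. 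Two smaller points: \cref{lem_4layers} concerns the layer of holes \emph{after a spread} and is not needed here, since the layer-$\geq 3$ condition is already a hypothesis of the lemma; and you do not address the ``respectively'' clause of the statement, i.e.\ the degenerate case $\min(\width{C},\height{C})\leq 2$ in which the dimensions cannot drop by two and only the ``at least four ears cleaned'' alternative holds---the paper dispatches this case at the outset.
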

\begin{proof}
It is easy to assess that $C$ could be cleaned with one agent traversal if $\min(\width{C},\height{C}) \leq 2$. Hence, let us assume that $\min(\width{C},\height{C}) > 2$. 

We will prove that during one traversal, for each compass direction at least one ear is cleaned. W.l.o.g. let us examine the north box side. 

Except for stretchings, there are nine possibilities how an ear can look like. See \cref{p_aabbconfigs} for all nine possible variants of north ears.

In addition to \cref{p_aabbconfigs}, in \cref{p_aabbconfigsexact}, we prove step by step for two example ear configurations that ears are cleaned in one agent traversal. Observe how critical cells within the ears' parts protruding to the east and the west lose their criticality during the passing of the ear so they can get cleaned. The cleaning of other ear variants is performed analogously, and every ear is cleaned when completely passed by an agent in boundary mode. Here we make use of the assumption the outmost hole cell in $C$ is located in a layer $\geq 3$. Otherwise, there could exist holes in layer 2 causing critical cells in layer 1 that would not become uncritical in this way and therefore make the cleaning of an ear impossible.
If an ear is cleaned, either the contamination's width or height is reduced by 1, and its circumference is reduced by at least two. Hence, during one traversal, for each compass direction at least one ear is cleaned, proving the lemma.
\end{proof}

\begin{figure}
\center
\includegraphics[width=7cm]{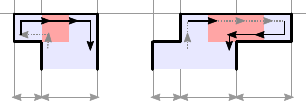}
\caption{Exact \emph{SEP} behavior on two example north ears cleaned after one traversal. The left example is a stretched variant of configuration (2) in  \cref{p_aabbconfigs}, the right one of configuration (7). Cells are marked red by their \emph{initial} state of criticality. In both examples, \emph{SEP} boundary mode trajectories are included.  Every atomic agent movement is marked by an arrow tip on its target. During grey, dotted steps, the agent does not clean. Black, continuous steps mark cleaning of the cell the agent leaves in this step. In both examples, we assume that the agent is not cleaning when entering layer 1. We name the steps alphabetically and reference the line numbers responsible for turning and cleaning in \cref{alg_smartedgepeeling}, where appropriate. \textbf{Left example.} Step A) Agent enters layer 1. B) Moves left (\cref{ln_boundary_left}). C) Turns right twice, cleaning switched on (\cref{ln_boundary_rear,ln_cleaning_normal}), rendering target cell uncritical. D) Moving on (\cref{ln_boundary_front}), still cleaning. E) Right turn (\cref{ln_boundary_right}), still cleaning, leaving layer 1. \textbf{Right example.} Ommiting line numbers known from before. A) Enters layer 1. B) Turns right, cleaning switched on. C) Passing critical cell, cleaning switched off again (\cref{ln_critical_passed}). D) Just moving. E) Double right turn, cleaning switched on, rendering target cell uncritical. F) Still cleaning, again rendering target uncritical. G) Left turn (\cref{ln_boundary_left}). Cleaning switched off. However, we are located in a tail, which is cleaned anyway (\cref{ln_cleaning_tail}). The rest of the configurations can be analyzed analogously.}
\label{p_aabbconfigsexact}
\end{figure}


By this we also know  $\circum{D} \leq \circum{C}-8$.

\section{More efficient boundary search}
\label{s_quicksearch}

In this section, we will make use of our geometry guarantees and introduce an optimization for boundary searching after a spread in order to resume cleaning earlier after a spread and optimize \emph{SEP}'s runtime. We call this optimization \defi{quick search}. As our optimization only affects the \emph{SEP}'s search mode and not the way of cleaning, the proofs presented so far stay valid.

\begin{lemma}
\label{lem_quicksearch}
Let $C \in \contaminationset$. Let an agent perform \emph{SEP}, be in boundary mode and let a spread happen. After that, the agent can reach the boundary and switch to boundary mode again within three time steps.
\end{lemma}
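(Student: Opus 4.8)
The plan is to reduce the statement to a small, purely local analysis of how the monotone boundary $\poly{C}$ moves in the immediate neighbourhood of the agent when a spread happens, together with a matching specification of the search behaviour (\emph{quick search}). First I would record where the agent can be at the instant a spread is detected: it is in boundary mode, and since boundary mode is a left-hand traversal of $\poly{C}$ (\cref{s_strategy}), its cell $c$ either lies on $\poly{C}$ with the clean exterior touching its neighbourhood, or — at a reflex vertex of $\poly{C}$ — lies one cell inside $\poly{C}$, the traversal ``cutting the corner''. The global results do the rest: by \cref{lem1_compltransition_spread} the spread outcome $D$ is again in $\contaminationset$; by \cref{cor_noconditionschanged} no new holes appear; and by \cref{lem_4layers} any hole of $D$ sits in layer $\ge 4$, hence far from $c$, so near $c$ the spread behaves exactly as if $C$ had no holes. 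From here everything is local, and the point is to improve the $\width{C}+\height{C}-2$ bound of \cref{lem_searchtime} to a constant for the special situation of re-searching right after a spread.

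Next I would fix \emph{quick search}: instead of the full reset of \cref{ln_spreaddetected}, on detecting a spread the agent keeps the single bit recording that in boundary mode the clean exterior lay to its left, turns to face that side, and then moves straight ahead as long as the cell in front is contaminated; the first time the cell ahead is clean it is standing on $\poly{D}$, turns right, and is back in boundary mode. (This is just the ordinary search-mode automaton of \cref{alg_smartedgepeeling} driven by the retained ``exterior-is-left'' bit rather than by re-deriving it, with the bearing counter reinitialised so the block-1 logic fires on the reorienting move.) Granting this, the lemma reduces to one geometric claim: after the spread, along the direction in which the exterior lay, the agent meets a clean cell after at most two forward moves, every cell it crosses meanwhile is contaminated, and the cell on which it stops is a genuine outer-boundary cell of $D$. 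That is at most two forward moves plus one reorienting move, i.e. at most three time steps.

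The geometric claim I would prove by a finite case distinction on the local shape of $\poly{C}$ at $c$. Because $C\in\contaminationset$, a neighbourhood of $c$ is a monotone staircase, so $c$ lies on a straight border stretch, at a convex vertex (a right turn of the traversal), at a reflex vertex (a left turn), or is a tail cell; once the no-U-turn property is imposed there are only $O(1)$ admissible $4\times4$ neighbourhoods in each case. For each I would apply one spread step and check: the cell immediately outward becomes contaminated; the cell two steps outward stays clean, except when the local configuration is a single ``step'' of the staircase, in which case it is the cell three steps out that is clean; and in every case the cell on which the agent stops acquires a clean neighbour, so it lies on $\poly{D}$, and the final right turn leaves the agent in a legitimate boundary-mode state. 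Reading off the move counts, the straight and reflex configurations cost at most two steps and the convex / ``step'' configurations cost at most three, which is exactly where the constant $3$ comes from.

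\textbf{The main obstacle} I expect is the reflex-vertex case. There the left-hand traversal sits, for one step, on a cell all four of whose neighbours are already contaminated, so the ``exterior on the left'' picture degenerates and the naive ``walk straight until a clean cell appears'' rule does not apply verbatim; one must argue directly that even this cell is left within three steps of $\poly{D}$ after the spread, possibly via a dedicated sub-rule for the reflex situation. This is precisely where the monotonicity of the four chains (controlling how far the staircase can jump between adjacent columns, hence how deep the corner-cutting excursion reaches) and \cref{lem_4layers} (keeping holes out of the local picture) have to be used carefully; the remaining configurations are routine picture-chasing. A secondary nuisance is pure bookkeeping: one has to specify the exact state — heading plus a reset bearing counter — in which \emph{quick search} hands control to the search-mode automaton, so that the switch back to boundary mode is triggered on the reorienting move and not one step later.
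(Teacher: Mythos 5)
Your overall skeleton matches the paper's: reduce the claim to a finite local case analysis of the agent's neighbourhood at the moment of the spread, exploit the monotonicity of the four chains of $\poly{C}$ to guarantee clean cells nearby, and replace the full search by a hard-coded re-search of length at most three (this is exactly what the paper does via \cref{p_quicksearchsituations,p_quicksearchstrategy}). However, two steps of your plan have genuine gaps. First, you explicitly leave open the one case that carries the whole difficulty: the left-turn (reflex) situation in which all four neighbours of the agent's cell are contaminated, so that ``turn towards the old exterior and walk straight until a clean cell appears'' has no well-defined direction; you only remark that a ``dedicated sub-rule'' is ``possibly'' needed. This is precisely the paper's situation (1) (west neighbour contaminated), and the paper closes it by enumerating the ten admissible pre-spread configurations and exhibiting one fixed path of length at most three that in every admissible configuration ends on a contaminated cell adjacent to a guaranteed-clean cell; without working this case (and the stripe-like case, the paper's situation (6), where only a thin strip rather than a quadrant is guaranteed clean) out, neither the bound of three steps nor the uniform ``at most two forward moves'' claim is established.

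Second, your exclusion of holes --- needed so that the clean cell the agent finds really witnesses the outer boundary $\poly{D}$ and not a hole boundary, which is what \cref{lem_oneandahalftraversalsafterquicksearch} later relies on --- rests on a misreading of \cref{lem_4layers}. Layers are measured from the bounding box, not from the agent or from $\poly{D}$; since the monotone chains may staircase deep into the box, the agent's own cell can lie in a layer $\geq 4$, so ``holes of $D$ lie in layer $\geq 4$'' does not imply that holes are far from the agent. Likewise, your inference ``the stopping cell acquires a clean neighbour, so it lies on $\poly{D}$'' is not valid as stated, because a clean neighbour could a priori be a hole cell. The paper avoids this by deriving ``clean \emph{and not part of a hole}'' directly from the monotonicity of the chains: the targeted cells lie in exterior regions (the green areas of \cref{p_quicksearchsituations}) that can contain neither contamination nor holes after the spread. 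A correct patch along your lines exists --- hole cells of $D$ had no contaminated 4-neighbour in $C$, hence lie at distance at least $2$ from the agent's pre-spread cell and on its interior side --- but that argument must actually be made; it is not what \cref{lem_4layers} gives you.
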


\begin{proof}
Let $D$ be the outcome of $C$ after the spread. By \cref{cor_noconditionschanged}, $D \in \contaminationset$. W.l.o.g. let the agent be oriented northwards and traverse $C$'s boundary in boundary mode. There are only few possible situations an agent can find itself in when traversing $C$'s boundary left hand rule, right before a spread occurs. They are depicted in \cref{p_quicksearchsituations}. As $\poly{C}$ consists of four monotonic chains, for any of the situations that can occur, green areas are depicted that are guaranteed to be clean in $D$ after the spread occured.

\begin{figure}%
\center
\includegraphics[width=\columnwidth]{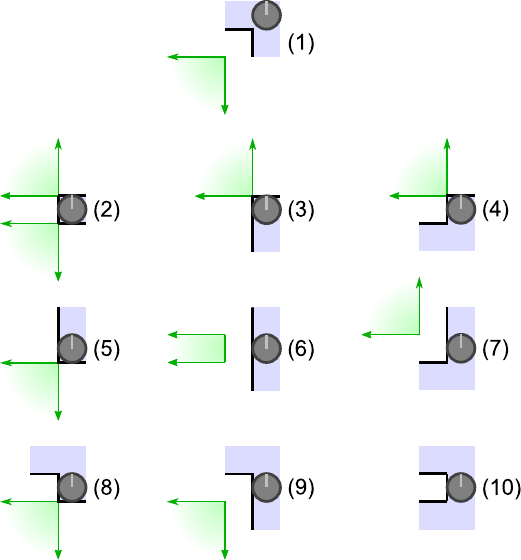}%
\caption{Let $p$ be the agent's position, and let the agent be in boundary mode oriented northwards. If $p$'s west neighbour is contaminated, situation (1) occurs. If $p$'s west neighbour is clean, we can distinct the 9 cases (2) -- (10) dependent on three possible ways $\poly{C}$ may turn at each of the ends of the border edge next to the agent. In every of the situations except for (6) and (10), note the green quadrants are guaranteed to be clean even after a spread because of the monotony of the four chains $\poly{C}$ consists of. In Situation (6), the green stripe is guaranteed to be clean. Situation (10) cannot occur for contaminations in $\contaminationset$.
}%
\label{p_quicksearchsituations}%
\end{figure}

For any of these  possible situations  there are cells in the proximity of the agent clean and not part of a hole after the spread. Hence, we propose the following optimized strategy instead of repeatedly performing a full search  for the boundary, see \cref{p_quicksearchstrategy}: The agent follows a hard-coded path of maximum length three until located at a contaminated cell next to a clean cell. Once located next to one of the depicted cells clean, it turns so that the clean cell is to its left hand side and switches back to boundary mode. If it senses to be located next to a right turn in $\poly{C}$, it also sets the \texttt{lastTurnWasRight} variable accordingly. 
\end{proof}

\begin{figure}%
\center
\includegraphics[width=45mm]{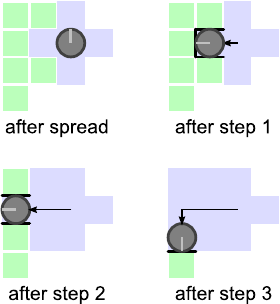}%
\caption{The three hard-coded steps of quick search. After the spread, there is no clean cell within the 4Neighborhood of the agent any more. However, the agent knows one of the green cells must be clean. It then performs the three hard-coded steps depicted until it has found a clean cell, which happens at step 3 at the latest.}%
\label{p_quicksearchstrategy}%
\end{figure}

Spreads add two to both a contamination's width and height. If after a spread an agent manages to clean one ear of every compass direction and another fifth ear, it can reduce both width and height of a contamination by two, and one of them by three, shrinking the contamination's dimensions more than the spread increased them. We now investigate how long this process takes.


\begin{lemma}
\label{lem_oneandahalftraversalsafterquicksearch}
Let $C \in \contaminationset$. Let an agent be in boundary mode on $C$. Let a spread happen, yielding contamination $D$. Given $d \geq 3\width{C} + 3\height{C} +6$, before the next spread, \emph{SEP} cleaning operations yield a contamination $C' \in \contaminationset$ with $\width{C'} \leq \width{C}$, $\height{C'} \leq \height{C}$ and $\width{C'}+\height{C'} \leq \width{C}+\height{C}-1$.
\end{lemma}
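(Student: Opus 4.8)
The plan is to budget the $d$ time steps available before the next spread as follows: a quick search to re-enter boundary mode, one full traversal of the post-spread contamination, and enough of a second traversal to clean one additional ear.

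Write $w := \width{C}$ and $h := \height{C}$. By \cref{observation_spread} we have $\width{D} = w+2$ and $\height{D} = h+2$; by \cref{cor_noconditionschanged}, $D \in \contaminationset$; and by \cref{lem_4layers} the outmost hole cell of $D$, if one exists, lies in a layer $\geq 4 \geq 3$. We may assume $\min(\width{D},\height{D}) > 2$, since otherwise one traversal cleans $D$ completely (as noted at the start of the proof of \cref{lem1_aabbcircumchange_simplyconnected_antsweep}) and the agent terminates, making the statement vacuous. First I would invoke \cref{lem_quicksearch}: within $3$ steps the agent is back in boundary mode on $D$. Then I would let it perform one full traversal of $D$, which costs $\circum{D} = 2(w+2) + 2(h+2) - 4 = 2w + 2h + 4$ steps by \cref{lem_circumferencelenght}. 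Since the hole-layer hypothesis of \cref{lem1_aabbcircumchange_simplyconnected_antsweep} is met, this traversal cleans at least one ear for each compass direction and produces $C'' \in \contaminationset$ with $\width{C''} \leq w$ and $\height{C''} \leq h$, hence $\circum{C''} \leq 2w + 2h - 4$. If already $\width{C''} + \height{C''} \leq w + h - 1$, take $C' := C''$; otherwise $\width{C''} = w$ and $\height{C''} = h$, and it remains to clean one more ear.

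So far $3 + (2w + 2h + 4) = 2w + 2h + 7$ steps have been used, so at least $(3w + 3h + 6) - (2w + 2h + 7) = w + h - 1$ steps remain before the next spread, the agent still being in boundary mode. It therefore suffices to show that the agent completely passes --- and hence, by the argument in the proof of \cref{lem1_aabbcircumchange_simplyconnected_antsweep}, cleans --- at least one more ear within those $w + h - 1 \geq \lceil \circum{C''}/2 \rceil$ steps, i.e.\ within at most half a second traversal: cleaning a single ear reduces the sum of width and height by at least $1$ while, by the remark in \cref{s_strategychanges} that \emph{SEP} cleaning never increases width or height, keeping the width $\leq w$ and the height $\leq h$. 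Iterating \cref{lem_compltransition_antsweep} (or \cref{cor_noconditionschanged}) then gives $C' \in \contaminationset$, and the total step count $3 + (2w+2h+4) + \lceil \circum{C''}/2 \rceil \leq 3w + 3h + 5 \leq d$ confirms everything fits before the next spread.

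\textbf{Main obstacle.} The remaining work is the timing claim: starting from wherever the first traversal ended on $\poly{C''}$, the agent must finish passing at least one ear within half a traversal. The geometric fact to make precise is that the north, east, south and west ears are spread around $\poly{C''}$ so that, in cyclic traversal order, the phases in which the agent passes consecutive ears are at most $\lceil \circum{C''}/2 \rceil$ steps apart (a north ear cannot lie close to a south ear along the boundary, nor an east ear close to a west ear, by monotonicity of the four chains); consequently any block of $\lceil \circum{C''}/2 \rceil$ consecutive boundary-mode steps entirely contains at least one such ear-passing phase. Pinning down this half-circumference gap bound, and double-checking that the hole-layer hypothesis of \cref{lem1_aabbcircumchange_simplyconnected_antsweep} still holds at the start of the second traversal --- it does, since over one traversal the layer of any hole cell drops by at most one and thus stays $\geq 3$ --- is the delicate part.
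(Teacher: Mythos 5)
Your overall plan (quick search, then one traversal to get four ears, then a fraction of a second traversal for a fifth ear) is in the same spirit as the paper's proof, but your accounting diverges from it at exactly the point you flag as the ``main obstacle,'' and that obstacle is a genuine gap, not a routine verification. Your key unproven claim is that any block of $\lceil \circum{C''}/2\rceil$ consecutive boundary-mode steps contains a complete ear passing. This appears to fail: two consecutive ears of $C''$ (say the north and east ears) can each span almost an entire side and nearly meet at a corner, so if the agent has just stepped past the start of the north ear it needs on the order of $\width{C''}+\height{C''}-1$ steps to completely pass the east ear, which already exceeds $\lceil\circum{C''}/2\rceil = \width{C''}+\height{C''}-2$; ears with protruding parts (types (1), (4), (7) of \cref{p_aabbconfigs}) force revisits and make this worse. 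Since your total budget $3w+3h+5$ has only one step of slack against $d\geq 3w+3h+6$, the argument does not survive even a small overshoot here. A secondary issue: you invoke \cref{lem1_aabbcircumchange_simplyconnected_antsweep} for a traversal that begins wherever quick search leaves the agent, but the proof of that lemma rests on ears being \emph{completely passed}; a traversal begun mid-ear passes the starting ear in two fragments, which is precisely why the paper does not use the lemma from an arbitrary starting position.

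The paper sidesteps both problems by exploiting the geometry immediately after a spread. Its phase 1 spends $\height{D}+x$ steps (including the three quick-search steps) walking the agent to the first \emph{turning point}, i.e., to the start of an ear, using that the corner cells of $\boundingbox{D}$ are clean so all four ears of $D$ are of type (9) (no critical cells, no revisits). The traversal of phase 2 therefore begins at an ear's start and cleans all four ears in $2\width{D}+2\height{D}-5$ steps, and -- crucially -- it deposits the agent exactly at the west end of the new north ear, so the fifth ear costs only $\width{D}-x-1$ further steps rather than half a circumference; the $x$'s cancel and the total is exactly $3\width{C}+3\height{C}+6\leq d$. To repair your write-up you would essentially have to reintroduce this ``walk to a turning point first, and track where the traversal ends'' bookkeeping, which is the substance of the paper's proof.
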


\begin{proof}

Let an agent have traversed $C$'s boundary in boundary mode, and let a spread happen. Let $D \in \contaminationset$ (\cref{cor_noconditionschanged}) be the outcoming contamination. By \cref{lem_4layers} there will be no hole cells in a layer less than 4.

Let us use the following conventions. We denote the most westwards contaminated cell of the north ear as turning point, and analogously for the remaining three compass directions. Each of these turning points is marked by a black dot in \cref{p_ears}, left subfigure of each example. When passing a turning point, an agent turns right, cleaning the ear introduced by the turning point, then leaving it. As the corner cells in $\boundingbox{D}$ cannot be contaminated after the spread (they cannot have had a contaminated 4Neighbor), we know that $D$'s four ears are of type (9) with respect to \cref{p_aabbconfigs}. This ear type does not contain any critical cells, and if traversed, is cleaned without the need of  a cell to be visited twice. 

\begin{figure*}%
\center
\includegraphics[width=\textwidth]{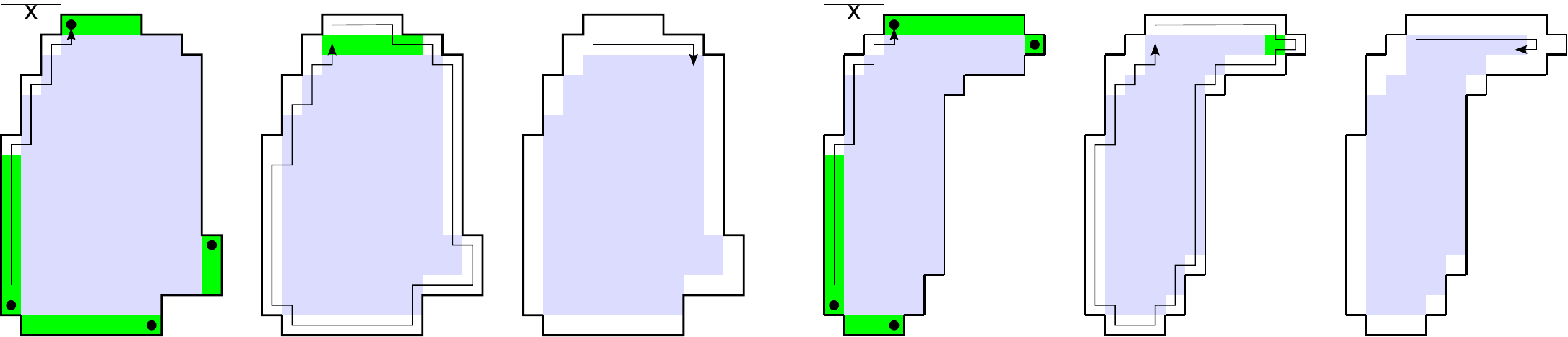}%
\caption{Two examples for agent trajectories when cleaning five ears after quick search. The left three subfigures represent the tree phases described in the proof of \cref{lem_oneandahalftraversalsafterquicksearch}. The right three subfigures describe the same three phases, but on another contamination example. We depict the three phases for two contamination examples to illustrate the two cases occuring at the end of the proof of \cref{lem_oneandahalftraversalsafterquicksearch}. All ears to be cleaned within the next phase are depicted green. Any phase's subfigure contains the agent's trajectory during the phase. In every first phase's subfigure, turning points are marked by dots.}%
\label{p_ears}%
\end{figure*}

We split the time until five ears are cleaned into four phases (each starting at the preceding phase's end or after the spread, respectively):



\begin{itemize}
\item Phase 1, until the agent reaches the first turning point,
\item phase 2, until the first four ears of type (9) are cleaned,
\item phase 3, until the last ear is cleaned.
\end{itemize}

\emph{Phase 1.} Getting back to boundary mode takes the agent three time steps (\cref{lem_quicksearch}). In the worst case, the agent just missed a turning point. W.l.o.g. let it miss the west one, so the first turning point to pass is the one of the north ear. Between two turning points, an agent moves in a monotonous trajectory (\cref{p_ears}, left subfigure of each example). In the vertical, the agent has to cover a distance of $\height{D}-3$ in the worst case (assuming the north ear started most to the west and its turning point was only missed most closely). In the horizontal, the agent has to cover $x$ cells to reach the westmost cell of the north ear, where $x \geq 1$ (the corner cells in $\boundingbox{D}$ cannot be contaminated for they cannot have had a contaminated adjacent cell). Overall, phase 1 needs $\height{D} + x$ time steps.

\emph{Phase 2.} As a spread just happened, the outmost hole cell in $D$ can be only in a layer $\geq 4$ \cref{lem_4layers}. By \cref{lem1_aabbcircumchange_simplyconnected_antsweep}, within one traversal the agent is able to clean $D$'s north, east, south and west ear. One traversal takes $2\width{D}+2\height{D}-4$ time steps (\cref{lem_circumferencelenght}, \cref{def_traversal}) and is depicted in \cref{p_ears}, middle subfigure of each example. As by its cleaning operations, the contamination lost one unit of height in the meantime, the agent needs even one step fewer than a traversal: $2\width{D}+2\height{D}-5$ time steps. After that, the agent is located within a new north ear in a horizontal distance of $x$ to the north side of the original $\boundingbox{D}$.

\emph{Phase 3.} With respect to the original $\boundingbox{D}$ the Agent is located in layer 2 and hole cells can only exist in layers $\geq 4$, so holes cannot cause critical cells in the north ear to clean. Additionally, by the traversal performed in phase two, all cells adjacent to the east side of  $\boundingbox{D}$ are clean.  Hence, the agent needs at most $\width{D}-x-2$ cells to reach the east end of the north ear. There are two cases:
\begin{itemize}
\item The ear does not contain any cells protruding to the east, namely has been of types (2), (3), (5), (6), (8) or (9) with respect to \cref{p_aabbconfigs}. In this case, the agent cleans the ear's last cell and heads south.
\item The ear does contain cells protruding to the east, it has been of types (1), (4), or (7). In this case, the agent finds itself in the eastmost cell of the north ear, which however is also an east ear. It cleans the cell, as it is also a tail, and heads west again. In this case, contrary to our expectations, the agent cleaned an east ear, not a north one. 
\end{itemize}
Both cases consume one further time step. Phase 3 needs $\width{D}-x-1$ time steps. A contamination example yielding the former case is depicted in the three left subfigures of \cref{p_ears}, one yielding the latter case in the three right subfigures. Phase 3 is depicted in the right subfigure of each example.

All three phases together need $3\width{D} + 3\height{D}-6$ time steps, which, by Obs. \ref{observation_spread}, equals $3\width{C} + 3\height{C}+6$ time steps. Let the contamination after this period of time be $C'$. By \ref{cor_noconditionschanged}, $C' \in \contaminationset$. Furthermore, for every compass direction, one ear has been cleaned, and one additional ear has been cleaned, so $\width{C'} \leq \width{C}$ and $\height{C'} \leq \height{C}$. By the fifth ear cleaned, additionally $\width{C'}+\height{C'} \leq \width{C}+\height{C}-1$.
\end{proof}

\section{Correctness and run time} \label{s_correctnessruntime}

We now prove the theorem already stated in the introduction. Let $\lambda$ denote the maximum length of all shorter edges of the rectilinear holes inside a contamination $C \in \contaminationset$ ($\lambda = 0$ if there do not exist such). First, let us recall the theorem. 

\begin{theorem}\label{main-theo}
Given speed $d \geq 3(h+w) +6$, and starting from a contaminated cell, strategy \emph{SEP} cleans each contamination in $\contaminationset$ of height $h$ and width $w$ in at most $(\frac{\lambda}{2} + h + w +5) d$ many steps.
\end{theorem}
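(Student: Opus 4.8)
\begin{proofsketch}
The plan is to track the potential $\Phi_i:=\width{C_i}+\height{C_i}$, where $C_i$ denotes the contamination immediately before the $i$-th spread and $\Phi_0:=h+w$, and to show that $\Phi$ drops by at least one from one spread to the next once the agent follows the boundary, so that the contamination vanishes after $O(h+w)$ spreads, each of duration $d$. First I would dispose of the initial search: by \cref{lem_searchtime} the agent enters boundary mode after at most $h+w-2$ steps, and since $h+w-2<d$ whenever $d\ge 3(h+w)+6$, no spread has occurred yet, so $C$ is unchanged and $\Phi_1\le h+w$ because cleaning never increases $\width{}$ or $\height{}$. If $\min(h,w)\le 2$ the contamination is cleaned within one further traversal (the base case in the proof of \cref{lem1_aabbcircumchange_simplyconnected_antsweep}); since a traversal of a contamination with $\width{}+\height{}\le h+w$ costs at most $2(h+w)-4<d$ steps, we would already be done comfortably inside the budget, so assume $\min(h,w)>2$.

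For the main estimate I would argue by induction on $i\ge 1$ that $\Phi_i\le h+w$. This makes \cref{lem_oneandahalftraversalsafterquicksearch} applicable to round $i$ (the block of $d$ steps between the $i$-th and $(i{+}1)$-st spread): its precondition $d\ge 3\width{C_i}+3\height{C_i}+6$ follows from $\Phi_i\le h+w$ and $d\ge 3(h+w)+6$, and the agent is in boundary mode just before spread $i+1$, because by \cref{lem_quicksearch} it returns to boundary mode within three steps after spread $i$ and never leaves boundary mode before the next reset. The lemma then gives $\width{C_{i+1}}\le\width{C_i}$, $\height{C_{i+1}}\le\height{C_i}$ and $\Phi_{i+1}\le\Phi_i-1$, which closes the induction and yields $\Phi_i\le h+w-(i-1)$ for every $i\ge 1$. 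Hence $\min(\width{C_i},\height{C_i})\le\Phi_i/2\le 2$ as soon as $i\ge h+w-3$, and from that point a constant number of further traversals --- each of cost below $d$, hence fitting inside a single round --- cleans the contamination. Counting the at-most-one search block, the at most $h+w-3$ productive rounds, and the finishing rounds, the total is at most $(h+w+c)d$ for an absolute constant $c$ in the hole-free case $\lambda=0$.

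It remains to account for holes, which contribute the $\tfrac{\lambda}{2}$ term. By \cref{lem_4layers} no hole cell lies in a layer below $4$ immediately after any spread, which is exactly the geometric hypothesis needed by \cref{lem1_aabbcircumchange_simplyconnected_antsweep} and hence by \cref{lem_oneandahalftraversalsafterquicksearch}, so the per-round drop of $\Phi$ remains valid while holes are present. A surviving hole, however, keeps $\width{}$ and $\height{}$ bounded below by roughly $\lambda+2$, so the ``finishing'' step above --- which needs the contamination to become thin and hence needs all holes gone --- can be postponed; and by \cref{lem1_compltransition_spread} a hole survives at most $\lceil\lambda/2\rceil$ spreads, since every spread shortens its short side by $2$. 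Folding these at most $\lceil\lambda/2\rceil$ extra $d$-blocks into the count, and keeping track of the additive constants (all absorbed by the ``$+5$''), yields the stated bound $\bigl(\tfrac{\lambda}{2}+h+w+5\bigr)d$.

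The skeleton of this argument --- a spread adds $4$ to $\Phi$ (\cref{observation_spread}), one-and-a-half traversals remove $5$, net $-1$ per round --- is easy; the main obstacle is the bookkeeping that turns ``$O(h+w)$ rounds'' into precisely $h+w+5$ and shows that holes cost no more than $\lceil\lambda/2\rceil$ extra rounds rather than interfering with the shrinkage in a more damaging way. That requires a careful worst-case analysis of how thin contaminations containing rectangular holes evolve under alternating spreads and \emph{SEP} cleaning.
\end{proofsketch}
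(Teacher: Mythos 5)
Your proposal is correct and follows essentially the same route as the paper: after disposing of the initial search via \cref{lem_searchtime}, both arguments rest on the per-spread-phase decrement of $\width{}+\height{}$ by at least one supplied by \cref{lem_oneandahalftraversalsafterquicksearch} (whose precondition stays valid because cleaning never increases width or height), and both attribute the extra $\tfrac{\lambda}{2}$ phases to the rectangular holes, which shrink by two per spread. Your write-up is, if anything, more explicit than the paper's about the induction maintaining $d \geq 3(\width{C_i}+\height{C_i})+6$ and about the finishing step once the contamination becomes thin; the remaining constant bookkeeping you flag is handled in the paper at a comparable level of detail.
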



\begin{proof}
We use the following naming convention: $C_i$ is the contamination that evolved out of $C$ by agent cleaning operations and spreads until the end of time step $i$. As the initial contamination $C$ is in $\contaminationset$, all $C_i$ are as well (\cref{cor_noconditionschanged}), so all the below referenced lemmata are applicable. 

During the first spread phase, $d$ is large enough to allow the agent to find the boundary (\cref{lem_searchtime}) and perform at least one full traversal (\cref{def_traversal}). In the worst case, the agent is unable to reduce the contamination's dimensions due to badly placed holes. In this case, the agent has to wait for the first spread, yielding contamination $C_d$ with height $h+2$ and width $w+2$ (Obs. \ref{observation_spread}). By $d \geq 3(h+w) +6$ and \cref{lem_oneandahalftraversalsafterquicksearch} we know that after the spread, the agent decreases the contamination's width and height more than the spread did increase them. Hence, $\width{C_{2d-1}} + \height{C_{2d-1}} \leq \width{C_{d-1}} + \height{C_{d-1}}-1$. 

This reasoning is applicable from any further spreadphase's end to the next: $\width{C_{(i+1)d-1}} + \height{C_{(i+1)d-1}} \leq \width{C_{id-1}} + \height{C_{id-1}}-1$. From the end of any spreadphase to the end of the next. Overall, the agent needs at most $w+h+4+1$ spread phases to completely clean the contamination. 

 Greater $d$ allow for more width and height reduction per spread phase,  leading to fewer needed spread phases. Holes however may impair the agent's usage of such large $d$ and force it to wait for further spreads. After $\frac{\lambda}{2}$ spread phases, all holes are fully contaminated, leading to $\frac{\lambda}{2} + h + w + 5$ necessary spread phases overall. 
\end{proof}


Without holes or holes located in deeper layers, the agent can make use of even larger $d$, leading to an arbitrarily large reduction of the contamination's width and height per spread phase and therefore fewer necessary spread phases. Hence we can conclude that while our strategy was designed for purely local handling of more complex scenarios, it also competes well on simply-connected and static scenarios.

\section{Lower bounds}         \label{s_lowerbounds}
\label{s_connectivitymaintenance}
Our lower bounds are based on the following isoperimetric inequality that can be found, e.g., in Altshuler et al. (\cite[Theorem 8]{altshulerdynamic}).

\begin{theorem}     \label{diamond-theo}
Let $C$ be a contamination of $c$ cells. Then at least $2 \sqrt{2c-1}$ new cells will be contaminated in the next spread.
\end{theorem}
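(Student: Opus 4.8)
The plan is to recognise the statement as the vertex-isoperimetric inequality for the grid graph on $\mathbb{Z}^2$: writing $A := C$ with $|A| = c$ and letting $N(A)$ be the set of clean cells having a contaminated $4$-neighbour (exactly the cells that become contaminated at the next spread), we must show $|N(A)| \ge 2\sqrt{2c-1}$, the extremal sets being the ``digital diamonds'' $D_k := \{(x,y) : |x|+|y| \le k\}$. A first, deliberately weak attempt uses axis projections: if $R$ and $K$ count the rows and columns meeting $A$, then in each non-empty row the cells just left of the leftmost and just right of the rightmost $A$-cell lie in $N(A)$, and these $2R$ cells (over all rows) are distinct; symmetrically columns give $2K$ distinct cells. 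Hence $|N(A)| \ge 2\max(R,K)$, and since $c \le RK$ this yields only $|N(A)| \ge 2\sqrt c$ --- for near-square shapes $c$ is close to $RK$, so this misses the target by a factor $\sqrt 2$, which must come from the ``roundness'' of the true minimiser and is invisible to projections.

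To obtain the sharp constant I would run a compression argument. Fix the ``diamond order'' on $\mathbb{Z}^2$ (cells ordered first by $|x|+|y|$, then cyclically around each $L_1$-sphere), and show by a suitable sequence of line-shifting operations --- replace $A \cap \ell$ by an appropriately justified contiguous block on each line $\ell$ --- that $A$ can be transformed, without increasing $|N(\cdot)|$ and without changing cardinality, into the size-$c$ initial segment $A^\ast$ of this order: a full diamond $D_k$ together with a contiguous arc on its outermost $L_1$-sphere. One then evaluates $|N(A^\ast)|$ directly: for $A^\ast = D_k$ one has $c = 2k^2+2k+1$, hence $2\sqrt{2c-1} = 2\sqrt{(2k+1)^2} = 4k+2$, while $|N(D_k)| = 4(k+1)$; a short case check over the intermediate sizes $|D_k| < c \le |D_{k+1}|$ confirms that $2\sqrt{2c-1}$ never catches up with $|N(A^\ast)|$. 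Combined with the monotonicity of $|N(\cdot)|$ under the compression, this gives $|N(A)| \ge |N(A^\ast)| \ge 2\sqrt{2c-1}$.

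The heart of the argument, and the main obstacle, is the compression step: one must choose the line-shifting operations so that each provably does not increase the new-cell count --- verified slice by slice, the point being that aligning the $A$-blocks on two adjacent lines can merge runs of $N$-cells but never create extra ones --- and so that iterating them terminates at the diamond-order initial segment rather than cycling, which constrains the order of application. By contrast the projection estimate and the closing arithmetic are routine bookkeeping, and since the inequality is not even tight --- the diamond case already loses $2$ --- none of the steps needs to be pushed to optimality.
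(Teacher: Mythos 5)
Note first that the paper does not prove this statement at all: it is quoted verbatim from Altshuler et al.\ (cited as Theorem~8 of the dynamic-cleaners paper) and used as a black box, so there is no internal proof to compare against. Judged as a standalone proof, your proposal has a genuine gap exactly where you place the ``heart of the argument'': the compression step is asserted, not proved, and it is not routine bookkeeping --- it \emph{is} the theorem. What you need is the full vertex-isoperimetric theorem for $\mathbb{Z}^2$ with $4$-adjacency (extremal sets are initial segments of the diamond order, a result of Wang and Wang), and the obstacle is concrete: the natural ``justify each line'' compressions along rows and columns do not drive an arbitrary set toward a diamond-order initial segment. Any staircase-convex set --- for instance a full $h\times h$ square --- is already stable under row- and column-justification, yet is nowhere near an initial segment of your order; so either you must compress along diagonal lines (where ``appropriately justified contiguous block'' is no longer well defined without further choices, since the parity-offset diagonals interact awkwardly with $4$-adjacency), or you must add a separate analysis of compression-stable sets, which in the known proofs of such ball-extremal results is the substantial part. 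As written, ``one must choose the operations so that each provably does not increase the new-cell count and so that iterating terminates at the initial segment'' is a restatement of the difficulty, not an argument, so the chain $|N(A)|\ge |N(A^{\ast})|$ is unsupported.

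Two smaller points. The closing arithmetic for full diamonds is fine ($|D_k|=2k^2+2k+1$, $2\sqrt{2c-1}=4k+2<4k+4=|N(D_k)|$), but the ``short case check'' for $|D_k|<c\le|D_{k+1}|$ is tighter than you suggest: the crude estimate that an arc of $m$ sphere-cells has at least $m+1$ outer neighbours gives only $|N(A^{\ast})|\ge 4k+5$, while the required bound approaches $2\sqrt{4k^2+12k+7}>4k+5$ near $m=4(k+1)-1$; you need the extra neighbours gained at the corners of the $L_1$-sphere to close this, so that verification must be done carefully. Finally, for the purposes of this paper the sharp extremal characterization is overkill: the statement is only ever invoked as a quotable lower bound (and is applied to possibly disconnected leftover sets, which your formulation for arbitrary finite sets does cover), so citing the known isoperimetric result, or proving a weaker self-contained bound that still beats the speeds in Theorems~\ref{lowbound-theo} and~\ref{neogreedy-theo}, would be the proportionate route; a from-scratch compression proof is a much larger undertaking than the two routine paragraphs surrounding it.
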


This bound is attained for the diamond shapes (or $L_1$-circles) that result from the spreading of a single contaminated cell; see \cref{p_diamonds}. Here all but four newly contaminated cells are infected by {\em two} neighbors, minimizing the contamination increase.

\begin{theorem}         \label{lowbound-theo}
An $h \times h$ square $C$ cannot be cleaned at speed $d <\sqrt{2}\  2h - 4$.
\end{theorem}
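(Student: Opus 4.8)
The plan is to compare the speed the robot needs against the worst-case growth rate of the contamination, using the isoperimetric bound of \cref{diamond-theo}. The key idea is that if the robot is too slow, then during one spread period of $d$ time units the contamination grows by more cells than the robot can possibly clean, so the total cell count is non-decreasing over time and the contamination can never be eliminated. So I would argue: during any spread period, the robot cleans at most $d$ cells (one per time step, at best), while by \cref{diamond-theo} a contamination of $c$ cells gains at least $2\sqrt{2c-1}$ new cells at the next spread. For the square $C$ of $h\times h$ cells we have $c = h^2$, so the first spread alone adds at least $2\sqrt{2h^2-1}$ cells, which is essentially $2\sqrt{2}\,h$. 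If $d$ is smaller than this quantity (minus a small additive correction for the $-1$ under the root and for the fact that cleaning and spreading interleave), the robot falls behind immediately and never recovers.

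First I would fix notation: let $c_t$ denote the number of contaminated cells at time $t$, with $c_0 = h^2$. Between consecutive spreads the robot removes at most one cell per step, so over a full spread period of length $d$ it removes at most $d$ cells; at each spread, the count jumps up by at least $2\sqrt{2c-1}$ where $c$ is the current count. I would then observe that the function $c \mapsto 2\sqrt{2c-1} - d$ is increasing in $c$, so if it is non-negative at $c = h^2$ — equivalently if $d \le 2\sqrt{2h^2-1}$ — then $c_t$ can never drop below $h^2$: any cells the robot peels off in a period are more than compensated by the following spread. Hence the contamination is never fully cleaned. Rearranging $d < 2\sqrt{2h^2-1}$ and absorbing the $-1$ and the interleaving slack into the additive constant yields the claimed bound $d < \sqrt{2}\,2h - 4$.

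The main obstacle is bookkeeping the interleaving of cleaning and spreading carefully enough to land exactly on the constant $-4$ rather than something sloppier. One has to be careful that the robot might clean cells in the same time step as a spread, and that the "at most $d$ cells per period" bound is tight only if the robot never revisits a cell and never wastes a step — but for a lower bound we only need the upper estimate on cleaning, which is safe. The subtlety is purely in choosing how much of the gap between $2\sqrt{2h^2-1}$ and $\sqrt{2}\,2h$ to spend on the additive constant; since $2\sqrt{2h^2-1} = \sqrt{2}\,2h\sqrt{1 - 1/(2h^2)} \ge \sqrt{2}\,2h - 1/(\sqrt{2}\,h) \ge \sqrt{2}\,2h - 1$ for $h \ge 1$, and one more unit of slack covers the interleaving, the constant $-4$ is comfortably conservative. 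I would also note explicitly, as the theorem statement's parenthetical in the introduction emphasizes, that this bound holds even against an omniscient Turing-machine-powered robot, since the argument only uses the cleaning-rate cap and the purely combinatorial growth bound of \cref{diamond-theo}, not any limitation on the robot's perception or memory.
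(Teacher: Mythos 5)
Your overall strategy --- compare the robot's cleaning rate of at most $d$ cells per spread period against the guaranteed growth from \cref{diamond-theo}, and show that the contaminated cell count can never decrease --- is exactly the paper's. But there is a genuine error in where you evaluate the growth bound. The spread acts on the contamination as it exists \emph{after} the robot has already removed up to $d$ cells, so the relevant count at spread time is $c = h^2 - d$, not $c = h^2$, and the guaranteed growth is only $2\sqrt{2(h^2-d)-1}$. Since the growth function is increasing in $c$, evaluating it at the larger value $h^2$ overstates the guaranteed growth; your appeal to monotonicity points the wrong way. Concretely, your intermediate claim --- that $d \le 2\sqrt{2h^2-1}$ already forces $c_t \ge h^2$ forever --- is false as stated: for the spread to compensate the $d$ cleaned cells you need $2\sqrt{2(h^2-d)-1} > d$, which rearranges to $(d+4)^2 < 8h^2+12$, i.e.\ roughly $d < \sqrt{2}\,2h - 4$, a strictly stronger condition than yours when $d$ is of order $h$.

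This also means your accounting of the additive constant is wrong in substance, not merely in bookkeeping. You attribute the $-4$ to the $-1$ under the root plus ``one more unit of slack for the interleaving'' and call it comfortably conservative. In fact the $-1$ costs essentially nothing, while the interleaving (the $-d$ under the root) costs about $\sqrt{2}\,d/h \approx 4$ when $d \approx 2\sqrt{2}\,h$ --- that is the entire constant, and it is essentially tight for this argument. The paper's proof is your argument carried out at the correct point: take $c = h^2 - d$ as the count surviving before the first spread, require $2\sqrt{2c-1} > d$, observe this is equivalent to $8h^2 + 12 > (d+4)^2$ and hence follows from $d+4 < \sqrt{2}\,2h$, and note that the pre-spread count before the second spread is then some $c' > c$, so the same inequality holds a fortiori ever after. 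With that one correction your proof goes through; your closing remark that the bound applies even to omniscient, Turing-powerful robots is correct and is made in the paper as well.
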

\begin{proof}
Before the first spread occurs, at least $c=h^2-d$ cells of square $C$ are still contaminated. By Theorem~\ref{diamond-theo}, at least $2 \sqrt{2(h^2-d)-1}$ cells will become newly contaminated. If this number is $>d$, an even larger number $c' >c$ of  cells will remain contaminated before the second spread occurs, and so on, proving that cleaning $C$ is impossible. Because of
\[
 2 \sqrt{2(h^2-d)-1}    >  d  \ \Longleftrightarrow \ 8 h^2 +12 > (d+4)^2
\]
the claim follows from $d+4 < \sqrt{2}\  2h$.
\end{proof}

Since the lower bound of Theorem~\ref{lowbound-theo} is not based on the robot's incomplete knowledge it applies to optimum offline solutions, too. The next result, in contradistinction, holds only for the online strategies we are considering.

\begin{theorem}     \label{neogreedy-theo}
Let $G$ be a strategy that always cleans the current cell if contaminated, moves to a contaminated cell in its 8-neighborhood, if there is one, and rests, otherwise. Then $G$ cannot clean all strips of length $w$ at speed $d < 4 (w+h) -16$.
\end{theorem}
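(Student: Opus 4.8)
The plan is to construct an adversarial family of contaminations — thin strips of length $w$ and small height $h$ — on which the greedy strategy $G$ necessarily wastes a linear number of steps walking back and forth, during which the contamination regrows faster than $G$ can contain it. The key structural idea is that $G$, being purely greedy and memoryless about global topology, will happily clean itself into a dead end: starting on the strip, it cleans a long run of cells until it reaches the far end, at which point it may ``see'' no contaminated cell in its $8$-neighborhood (because it has just cleaned everything locally), so it rests. Meanwhile, spreads keep occurring every $d$ steps, re-contaminating cells behind the robot and, crucially, widening the strip into a larger $L_1$-expanding blob. By the time $G$ senses contamination again and turns around, the contamination has grown substantially.

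First I would fix the geometry: take $C$ to be a $1 \times w$ strip (so $h = 1$, or more generally an $h \times w$ strip with $h$ fixed and small), place the robot at one end, and track the quantity ``number of contaminated cells $c_t$ just before the $t$-th spread.'' Second, I would argue that in the first spread phase $G$ can clean at most roughly $d$ cells of the strip (it moves $d$ times, cleaning along the way), but the strip had $\approx w \cdot h$ cells, and — this is the point — after the first spread the strip has become an $L_1$-disk-like region whose cell count is governed by the isoperimetric inequality of Theorem \ref{diamond-theo}: at least $2\sqrt{2c-1}$ new cells appear per spread. Third, I would set up the recurrence: if before spread $i$ there are $c_i$ contaminated cells and $G$ can remove at most $d$ of them per phase (one per step, and only along a connected frontier — in fact often fewer because of backtracking), then $c_{i+1} \geq c_i - d + 2\sqrt{2(c_i - d) - 1}$, and show that for $d < 4(w+h) - 16$ this sequence does not decrease to $0$: the square-root growth term dominates $d$ whenever $c_i$ stays above a threshold comparable to $d^2/8$, and one checks the initial strip already sits above (or is driven above) that threshold. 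This is exactly parallel to the proof of Theorem \ref{lowbound-theo}, with the algebra $2\sqrt{2(c-d)-1} > d \iff 8c + 12 > (d+4)^2$.

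The step I expect to be the main obstacle is making the bound ``$G$ removes at most $d$ cells per phase'' honest enough to close the gap down to $4(w+h)-16$ rather than something weaker. A naive accounting ($\leq d$ cells cleaned per $d$ steps) combined with the isoperimetric inequality would already give \emph{some} linear lower bound, but getting the constant $4$ requires exploiting the \emph{resting} behavior: I would show that on the strip, $G$ is forced to traverse a path of length $\approx w + h$ before it can clean a cell that actually shrinks the contamination's diameter, and that after a spread it must retrace that path, so effectively two spread phases pass per ``useful'' unit of progress — doubling the $\sqrt{2}(h+w)$-type bound of Theorem \ref{lowbound-theo} to $2\sqrt{2}(h+w) \approx $ something, and then a more careful look at how the strip's $c$-value is inflated by the first few spreads (a $1 \times w$ strip becomes, after $k$ spreads, a hexagon-like region of $\Theta(w k + k^2)$ cells) pushes the required $d$ up to the claimed $4(w+h)$. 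I would handle the additive constant $-16$ by the same slack-chasing as in Theorem \ref{lowbound-theo}'s proof (the $(d+4)^2$ versus $8c+12$ comparison), and note that, unlike Theorem \ref{lowbound-theo}, this argument genuinely uses $G$'s incomplete knowledge — an offline solver would simply clean from the far end inward and never backtrack.
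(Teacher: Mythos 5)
Your overall flavor is right (a strip, the greedy robot reaching the far end and resting, Theorem~\ref{diamond-theo} driving a growth-versus-cleaning comparison), but the proposal misses the one construction detail that makes the numbers close, and without it the argument does not go through. The paper does \emph{not} start the robot at one end of the strip: it starts it at an \emph{interior} cell and, after seeing which way the robot goes, declares that exactly \emph{one} contaminated cell lay on the other side of the start position. The robot cleans the $l+1$ cells ahead of it, thereby \emph{disconnecting} and abandoning that single cell, and then rests at the far end for $l$ full spread phases while the orphan grows, completely unimpeded, into a diamond of $2l^2+2l+1=\Theta(w^2)=\Theta(d^2)$ cells. Only at that point does the isoperimetric bound $2\sqrt{2c-1}>d$ kick in, and it then stays satisfied forever. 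With your construction (robot at one end of a $1\times w$ strip) nothing is left behind: for $d\geq w$ the robot cleans the whole strip before the first spread and simply wins, so no lower bound of order $4w$ can come out of it.

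The second gap is in your recurrence analysis, and you half-acknowledge it yourself. The threshold at which $2\sqrt{2c-1}$ beats $d$ is $c\approx d^2/8\approx 2w^2$, while the initial strip has only about $w$ cells; a connected contamination that the robot keeps nibbling at gains only $O(w+k)$ cells per spread, so it is not clear it ever crosses that threshold, and your ``driven above the threshold'' step is precisely the missing argument. The paper crosses the threshold not by any recurrence but by the $l$ phases of pure resting, during which the abandoned cell's diamond accumulates quadratically many cells with zero interference. That is also where the constant $4$ comes from: $2\sqrt{2(2l^2+2l+1)-1}\approx 4l$ with $l=w-2$, not from doubling the $\sqrt{2}(h+w)$ static bound (which would give $2\sqrt{2}\approx 2.83$, not $4$). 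The essential ingredient to add is therefore: greedy destroys connectivity; the adversary exploits this by leaving exactly one cell on the abandoned side; and the final comparison is made against the resulting free-growing diamond, not against the strip.
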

\begin{proof}
Consider a single row of cells, as shown in \cref{p_uncleanableclass}. The robot starts from an interior cell, cleans it and moves straight to the left or to the right until the end of the strip is reached. Let us assume it moves to the right. At this point we define the initial contamination such that only one of $w=l+2$ cells is situated to the left of the robot's start position, which remains contaminated as the robot cleans the other $l+1$ cells. While the robot rests at the rightmost cell, contamination spreads from the leftmost cell as shown in \cref{p_uncleanableclass}. After $l$ spreads, a diamond shape of $2 l^2 + 2 l +1$ cells is contaminated, among them the cell to the left of the robot. Before the $(l+1)$-st spread occurs, $c=2 l^2 + 2 l +1 - d$ cells are left contaminated. By \cref{diamond-theo}, at least $2 \sqrt{2c-1}$ cells will become newly infected. We have
\begin{align*}
 &&2 \sqrt{2(2 l^2 + 2 l  -d +1)-1}    &>  d  \\ 
& \Longleftrightarrow & 16 l^2 +16 l + 20 &> (d+4)^2
\end{align*}
which holds true because of $d < 4 l -4$. Hence, the increase in contamination will always exceed the maximum number $d$ of cells the robot can clean between spreads.
\end{proof}
\begin{figure}[tbp]
\center
\includegraphics[width=\columnwidth]{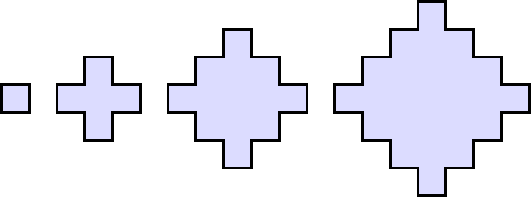}%
\caption{From the left to the right: A minimal contamination after 0, 1, 2 and 3 spreads.}%
\label{p_diamonds}
\end{figure} 
\begin{figure}[tbp]
\center
\includegraphics[width=5cm]{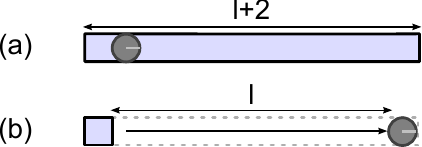}%
\caption{\textbf{(a)} A straight contaminated line of cells of length $l+2$ with the greedy agent's starting position. \textbf{(b)}  The greedy agent destroying it's connectivity. }%
\label{p_uncleanableclass}
\end{figure} 

The same result can be shown if we allow a greedy strategy $G$ to perform a kind of search for contaminated cells once no contaminated cell is left in its current neighborhood. This is because the robot is a finite state machine, so that only a cyclic search path pattern of constant diameter could result from this capability. If the start and end positions of the pattern are not equal, the agent translates through the space in a constant direction, never visiting cells on the opposite direction.

\section{Conclusions and further research}
\label{s_conclusions}

In this article, we presented a cleaning strategy \emph{SEP} enabling a single finite automaton robot to clean expanding contaminations by only local means. \emph{SEP} maintains geometric invariants and additionally ensures that the contaminated cells stay connected. Furthermore, we proved that greedy strategies violating the latter principle need greater spreading times $d$ than \emph{SEP} in general. We considered contaminations $\in \contaminationset$, i.e., with certain limitations on their geometric complexity. 
Besides improving lower bounds, our results suggest two main directions to obtain qualitative enhancements on the task of cleaning expanding contaminations.

One way of generalizing our work is to consider contaminations with arbitrarily complex shapes (\cref{p_complicated}), which inadvertently raise further challenges. For example, new holes can emerge in spreads and be of likewise geometrical complexity, or existing holes may split. Some of the lemmata we in this article can already be generalized to higher geometrical complexities. However, to be able to generalize the entire work, more geometrical analysis is necessary. 
\begin{figure}[tbp]
\center
\includegraphics[width=\columnwidth]{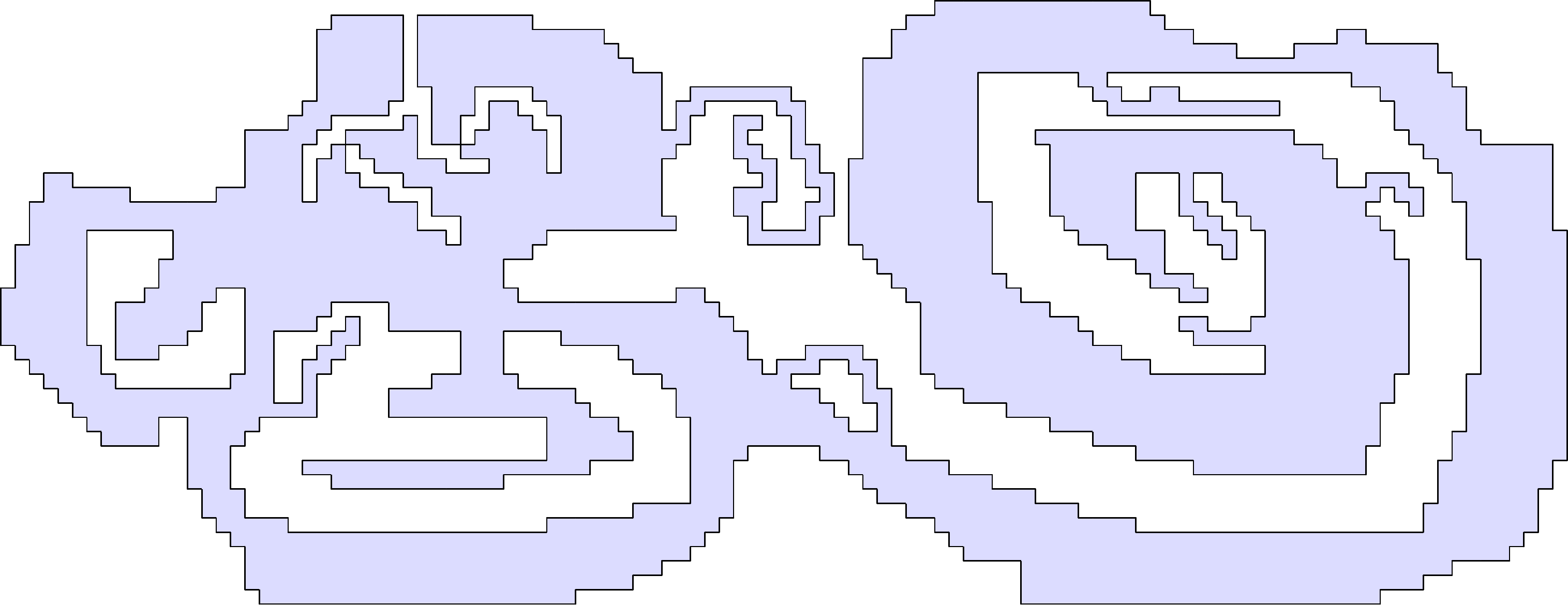}%
\caption{A contamination with high geometrical complexity. In a spread, new holes would emerge and existing holes would be split.}%
\label{p_complicated}
\end{figure}

A further interesting question is how to use a swarm of $k$ agents cleaning expanding contaminations in parallel in order to increase cleaning speed and exhibit fault tolerance known from biological systems.

We are confident that both ways of generalization lead to qualitatively new results. They are subject to our current research.

\bibliographystyle{plain}
\bibliography{biblio}

\end{document}